\documentclass[aps,prx,twocolumn,superscriptaddress,nofootinbib]{revtex4-2}

\usepackage{amsmath,amssymb,amsfonts,mathtools}
\usepackage{amsthm}
\usepackage{bbm}
\usepackage{hyperref}
\usepackage{xcolor}
\usepackage{dsfont}

\newcommand{\Tr}{\operatorname{Tr}}
\newcommand{\PPT}{\operatorname{PPT}}

\newcommand{\AREEP}{E_R^\infty}

\newcommand{\abs}[1]{| #1 |}
\newcommand{\ket}[1]{|#1\rangle}
\newcommand{\bra}[1]{\langle #1|}

\newcommand{\id}{\mathbbm 1}

\newcommand{\PPTk}{\PPT_k}
\newcommand{\PPTktilde}{\widetilde{\PPT}_k}
\newtheorem{theorem}{Theorem}

\newtheorem{lemma}[theorem]{Lemma}
\newtheorem{definition}{Definition}
\newcommand{\ketbra}[2]{\ket{#1}\!\bra{#2}}

\begin{document}

\title{Multinegativity and single-letter formulas for asymptotic entanglement}

\author{Raphael Brinster}
\email{Raphael.Brinster@hhu.de}
\author{Tulja Varun Kondra}
\author{Hermann Kampermann}
\author{Dagmar Bruß}
\author{Nikolai Wyderka}
\affiliation{Institut für Theoretische Physik III, Heinrich-Heine-Universität Düsseldorf, Universitätsstraße 1, 40225 Düsseldorf, Germany}
\date{\today}

\begin{abstract}

The study of entanglement inevitably leads to the concept of regularization, where measures are evaluated on infinitely many copies of a state. Single-letter formulas try to make these asymptotic entanglement quantities accessible
through a calculation on one copy of a state, but are rarely available. We introduce a decreasing
hierarchy of computable upper bounds on the asymptotic relative entropy
of entanglement with respect to positive-partial-transpose (PPT) states.
The regularization of every fixed hierarchy level equals this asymptotic
quantity. Additivity at any level therefore yields a single-letter
formula, even when the usual one-copy relative entropy is nonadditive.
We establish such additivity at the first nontrivial level for two broad
multiparameter families, both containing all Werner states. The upper-bound
construction also extends to sandwiched R\'enyi divergences. The hierarchy
motivates $k$-multinegative states, a generalization of binegative states and the associated $k$-multinegativity, which gives explicit upper bounds on both the asymptotic
relative entropy and exact PPT entanglement cost. We construct states which are
$k$-multinegative at arbitrarily large depths $k$ that separate consecutive
levels of the previously introduced entanglement-cost hierarchy, disproving its conjectured
finite collapse. These results provide state-dependent single-letter
formulas while identifying a limitation of universal finite-level
characterizations.

\end{abstract}

\maketitle

\section{Introduction}
\label{sec:introduction}

Entanglement enables quantum systems held by distant parties to perform
tasks that are impossible with classical resources alone \cite{bennett1993teleporting, bennett1996mixed}. Quantifying
the resources required to prepare an entangled state, or those that can
be extracted from it, naturally leads to an asymptotic setting \cite{bennett1996mixed, hayden2001asymptotic}. The corresponding quantities often
involve regularization, where a one-copy optimization must be evaluated on
arbitrarily many copies before taking a limit. Even when the original
optimization is tractable, the growing system size makes its
regularization difficult to determine. A central goal is therefore to
find single-letter formulas, which express an asymptotic quantity through
a finite optimization involving only one copy of the state \cite{lami2026asymptotic}.

A familiar route to such a formula is to prove additivity of the
underlying one-copy quantity \cite{ADVW2002, vollbrecht2001entanglement, rubboli2024new, beigi2026additivity}. Nonadditivity of the usual one-copy expression,
however, need not exclude a single-letter formula. A different one-copy
quantity can have the same regularization and nevertheless be additive
on the states of interest. Finding such an appropriate representative for a given state is therefore an alternative way to calculate their asymptotic quantity.

Our first main result is constructing such a single-letter formula (with the same regularization) for the asymptotic relative entropy of
entanglement with respect to positive-partial-transpose (PPT) states,
which we abbreviate as AREEP \cite{vedral1997quantifying, peres1996separability, horodecki1996necessary, Rains1999, rains2001semidefinite}. PPT states are states that remain positive
under partial transposition and provide a tractable relaxation of the set of separable states. AREEP also has a direct operational meaning in quantum hypothesis testing. The generalized quantum Stein lemma identifies it as the optimal exponential rate at which many copies of a state can be distinguished from the entire set of PPT states, even when the alternative states may be correlated across copies \cite{BrandaoPlenio2010,hayashi2025generalized}. It therefore quantifies how rapidly statistical evidence for non-PPT entanglement accumulates as the number of copies increases.
%AREEP also provides a lower bound on the exact entanglement cost under PPT operations.
The evaluation of AREEP is notoriously difficult by regularization, and the ordinary
one-copy relative entropy can strictly overestimate the asymptotic
value~\cite{ADVW2002}. Recent work has developed systematic
computable lower bounds~\cite{WangJingZhu2025,fangfawzi}; our approach
instead constructs a hierarchy of upper bounds that yield the same regularized value at every level.

We show additivity at the first nontrivial level for a
broad multiparameter family containing all Werner states \cite{werner1989quantum,audenaert2001asymptotic,ADVW2002}, thereby making their AREEP accessible by a single letter formula. The construction also extends to entanglement measures based on the sandwiched R\'enyi divergences \cite{muller2013quantum,wilde2014strong,rubboli2024new}.

The structure of the hierarchy leads naturally to the concept of iterated absolute
partial transposition. At the first two steps, this construction recovers
the familiar partial transpose $\rho^\Gamma$ and the operator associated with
binegativity $|\rho^\Gamma|^\Gamma$ \cite{ADVW2002}. Positivity of the latter is known to give a simple expression for exact PPT entanglement cost, namely the logarithmic negativity $\log \Tr \abs{\rho^\Gamma}$ \cite{AudenaertPlenioEisert2003, plenio2005logarithmic}. In particular $|\rho^\Gamma|^\Gamma$ is positive for all
two-qubit states~\cite{Ishizaka_2004}.
Continuing the iteration motivates the notion of $k$-multinegativity. We show that whenever the $k$-fold absolute partial transposition becomes positive semidefinite, its trace gives an explicit upper bound on both AREEP and
exact PPT entanglement cost.

Our second main result is concerned with 
a different strategy that has recently been used to tackle the evaluation of the so called exact entanglement cost under PPT operations, which quantifies the amount of entanglement required for exact state preparation~\cite{AudenaertPlenioEisert2003,WangWilde2020, wang2023exact,LamiMeleRegula2025}. In Ref.~\cite{LamiMeleRegula2025}, the authors construct converging upper and lower bound hierarchies and conjecture that they collapse at a certain finite level, which would yield single letter formulas for all quantum states. 
We answer this question in the negative by constructing $k$-multinegative states: for every finite level $k$, we find states which are nonpositive at that level. We show that these states exhibit a finite gap between arbitrary levels of the entanglement cost hierarchy from \cite{LamiMeleRegula2025}, establishing that the hierarchy approach fails to yield universal single letter formulas.

In total, our results therefore combine single-letter formulas obtained by proving additivity of our new hierarchy for a broad class of states with counterexamples showing that no fixed level of the known cost hierarchy is exact for all states.

\section{Definitions and Background}

We consider finite-dimensional bipartite states $\rho=\rho_{AB}$.
Partial transposition on subsystem $B$ is denoted by $\Gamma$.
We write $|X|=(X^\dagger X)^{1/2}$ and
$\|X\|_1=\operatorname{Tr}|X|$. All logarithms are to base 2.
For tensor powers, the bipartition is understood to be $A^n:B^n$.

\paragraph*{Relative entropy of entanglement.}
The set of positive-partial-transpose states is
\begin{equation}
 \mathrm{PPT}
 :=\{\sigma\ge0:\operatorname{Tr}\sigma=1,\ 
                         \sigma^\Gamma\ge0\}.
 \label{eq:ppt-states}
\end{equation}
For a state $\rho$ and a positive operator $\sigma$, the quantum
relative entropy is
\begin{equation}
 D(\rho\|\sigma)
 :=\operatorname{Tr}\!\left[\rho(\log\rho-\log\sigma)\right].
 \label{eq:relative-entropy}
\end{equation}
The PPT relative entropy of entanglement and its regularization are
\begin{align}
 E_R(\rho)&:=\inf_{\sigma\in\mathrm{PPT}}D(\rho\|\sigma),
 \label{eq:er}\\
 E_R^\infty(\rho)&:=\lim_{n\to\infty}\frac1n
                         E_R(\rho^{\otimes n}),
 \label{eq:areep} \\
 E_R^\infty(\rho) &\le E_R(\rho).
\end{align}
We refer to $E_R^\infty$ as AREEP.
If $E_R(\rho)$ is additive for $\rho$, i.e.
\begin{align}
    E_R(\rho^{\otimes n}) = n E_R(\rho),
\end{align}
then $E_R$ equals AREEP. For general $\rho$, however, it is not additive.
The Rains bound \cite{Rains1999, rains2001semidefinite} is
\begin{equation}
 R(\rho):=\inf_{\substack{\tau\ge0\\
                         \|\tau^\Gamma\|_1\leq1}}D(\rho\|\tau).
 \label{eq:rains}
\end{equation}
Although $R\leq E_R$, the Rains bound need not upper bound AREEP, as
there are states for which $R(\rho)<E_R^\infty(\rho)$
\cite{WangDuan2017}.

\paragraph*{Exact PPT entanglement cost.}
A PPT operation is a completely positive, trace-preserving map
$\Lambda$ for which
$\Gamma\circ\Lambda\circ\Gamma$
is also completely positive. Let
$\Phi_2=|\phi_+\rangle\langle\phi_+|$, where
$|\phi_+\rangle=(|00\rangle+|11\rangle)/\sqrt2$.
A rate $r\geq0$ is achievable for exact PPT preparation of $\rho$
if, for every sufficiently large $n$, there is a PPT operation
$\Lambda_n$ such that
\begin{equation}
 \Lambda_n\!\left(\Phi_2^{\otimes\lfloor rn\rfloor}\right)
 =\rho^{\otimes n}.
 \label{eq:exact-preparation}
\end{equation}
The infimum of achievable rates is the exact PPT entanglement cost,
$E_{c,\mathrm{PPT}}^{\mathrm{exact}}(\rho)$
\cite{AudenaertPlenioEisert2003,LamiMeleRegula2025}.
Note that here the preparation error is required to vanish exactly, rather than merely vanishing
asymptotically. AREEP is known to lower bound the cost \cite{hayashi2017quantum}, i.e., 
\begin{equation}
 E_R^\infty(\rho)\leq E_{c,\mathrm{PPT}}^{\mathrm{exact}}(\rho).
 \label{eq:areep-cost}
\end{equation}

The logarithmic negativity is $E_N(\rho):=\log\|\rho^\Gamma\|_1$.
For states, which are not binegative, it equals the exact PPT
entanglement cost~\cite{AudenaertPlenioEisert2003}:
\begin{equation}
 |\rho^\Gamma|^\Gamma\ge0
 \quad\Longrightarrow\quad
 E_{c,\mathrm{PPT}}^{\mathrm{exact}}(\rho)=E_N(\rho).
 \label{eq:zero-binegativity}
\end{equation}
This positivity condition holds for every two-qubit
state~\cite{Ishizaka_2004}.

\paragraph*{The $\chi$- and $\kappa$-hierarchies.}
In~\cite{LamiMeleRegula2025} the authors introduced two
hierarchies of semidefinite programs. For $p\geq0$ and $q\geq1$,
with $S_{-1}=\rho$, they are given by
\begin{align}
    \chi_p(\rho) = \min &\{ \Tr S_p\,:\, -S_i \leq S_{i-1}^\Gamma \leq S_i \nonumber \\
    &\text{ for } i = 0,\ldots,p \text{ and } S_{-1} = \rho \}, \label{eq:chi_hierarchy}\\
    \kappa_q(\rho) = \min &\{ \Tr S_{q-1}\,:\, -S_i \leq S_{i-1}^\Gamma \leq S_i, \; S_{q-1}^\Gamma \geq 0 \nonumber \\
    &\text{ for } i = 0,\ldots,q-1 \text{ and } S_{-1} = \rho\}. \label{eq:kappa_hierarchy}
\end{align}
Set $E_{\chi,p}:=\log\chi_p$ and $E_{\kappa,q}:=\log\kappa_q$.
The lower bounds $E_{\chi,p}$ increase with $p$, the upper bounds
$E_{\kappa,q}$ decrease with $q$, and
\begin{align}
 E_{\chi,p}(\rho)
 &\leq E_{c,\mathrm{PPT}}^{\mathrm{exact}}(\rho)
 \leq E_{\kappa,q}(\rho),
 \label{eq:cost-bracket}\\
 \lim_{p\to\infty}E_{\chi,p}(\rho)
 &=E_{c,\mathrm{PPT}}^{\mathrm{exact}}(\rho)
 =\lim_{q\to\infty}E_{\kappa,q}(\rho).
 \label{eq:cost-limits}
\end{align}
In particular, $E_{\chi,0}=E_N$. If two consecutive $\chi$-levels give the same value fo some state, both hierarchies
collapse to its exact cost. In \cite{LamiMeleRegula2025}, the authors conjectured [Conjecture~S33]
\begin{equation}
 \chi_2(\rho)=\chi_3(\rho)
 \qquad\text{for all states }\rho,
 \label{eq:collapse-conjecture}
\end{equation}
which would imply
\begin{equation}
 E_{c,\mathrm{PPT}}^{\mathrm{exact}}(\rho)
 =E_{\chi,2}(\rho)=E_{\kappa,3}(\rho)
 \label{eq:conjectured-cost-formula}.
\end{equation}

\section{Upper bound hierarchy on $\AREEP$}
In our search for single-letter formulas, we now turn to the first main result of this Letter:
\begin{align}\label{eq:main_result1}
\AREEP\le \Tilde{E}^{1}_{N,k}\le \dots \le \Tilde{E}^{1}_{N,0} = E_R.
\end{align}
Here, the quantities $\Tilde{E}^{1}_{N,k}$ are computable one-copy convex optimizations, yielding a hierarchy of progressively tighter upper bounds on $\AREEP$. To the best of our knowledge, these constitute the first nontrivial single-copy upper bounds on this quantity.

We prove Eq.~\eqref{eq:main_result1} in the following. In order to define $\Tilde{E}^{1}_{N,k}$, it is useful to define supersets of the subnormalized cone of PPT states. Motivated from \eqref{eq:chi_hierarchy} and \eqref{eq:kappa_hierarchy}, we define
\begin{align}\label{eq:PPTk}
  \PPTk =
&\{
\omega_0 \ge 0 :
\exists\, \omega_1,\ldots,\omega_k
\ \text{s.t.} \ \Tr \omega_k \le 1, \nonumber \\  
&-\omega_{i+1}
\le
\omega_i^{\Gamma}
\le
\omega_{i+1}
\ \forall\, i = 0,\dots,k-1
\}.
\end{align}
These were already introduced in \cite{WangJingZhu2025}. $\text{PPT}_0$ is the set of all subnormalised states and $\text{PPT}_1$ just corresponds to the Rains set used in \eqref{eq:rains}. 

Similarly define the $\kappa$-like sets \cite{LamiMeleRegula2025}
\begin{align}\label{eq:PPTktilde}
  \PPTktilde =
&\{
\omega_0 \ge 0 :
\exists\, \omega_1,\ldots,\omega_k
\ \text{s.t.} \ \Tr \omega_k \le 1, \; \omega_k^\Gamma \ge 0, \nonumber \\  
&-\omega_{i+1}
\le
\omega_i^{\Gamma}
\le
\omega_{i+1}
\ \forall\, i = 0,\dots,k-1
\}.
\end{align}
Note that $\PPTktilde$ differs from $\PPTk$ only by the additional positive partial transpose constraint $\omega_k^\Gamma \ge 0$. For example, $\widetilde{\PPT}_0$ corresponds to the subnormalized PPT cone. The following inclusions are shown in the Supplemental Material \ref{app:set_inclusions}
\begin{align}\label{eq:PPTk_inclusions}
    &\widetilde{\PPT}_0 \subseteq  \widetilde{\PPT}_1 \subseteq  \dots \subseteq \widetilde{\PPT}_\infty\\ 
    =& \PPT_\infty \subseteq \ldots \subseteq \PPT_1 \subseteq \PPT_0.
\end{align}
A schematic illustration is given in Fig.~\ref{fig:PPT_k}.
\begin{figure}
    \centering
    \includegraphics[width=1.0\linewidth]{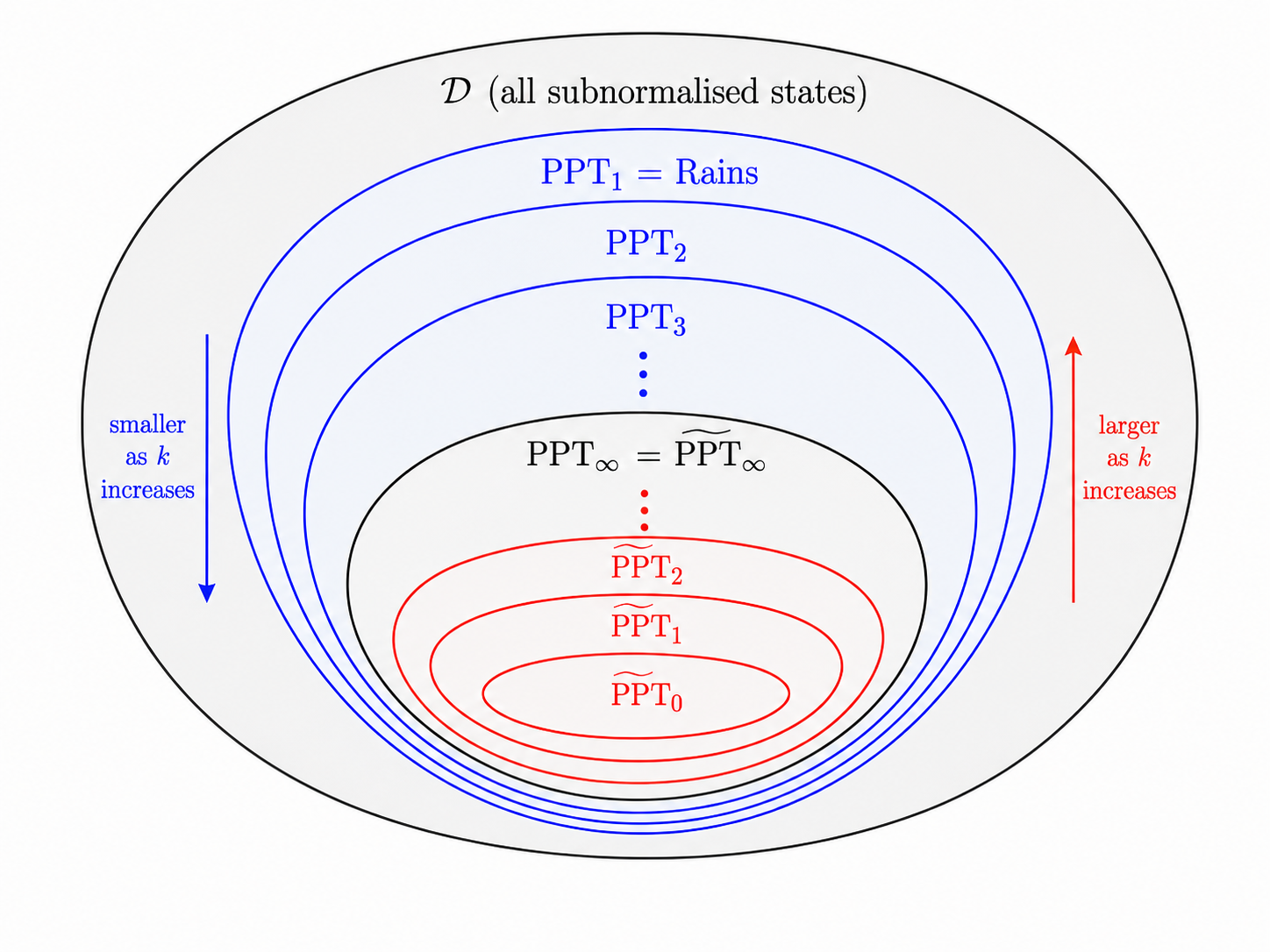}
    \caption{Schematic illustration of the set hierarchies $\mathrm{PPT}_k$ and $\widetilde{\mathrm{PPT}}_k$ defined in equations \eqref{eq:PPTk} and \eqref{eq:PPTktilde}. As $k$ increases, $\mathrm{PPT}_k$ decreases while $\widetilde{\mathrm{PPT}}_k$ increases, and both converge to the common limiting set $\mathrm{PPT}_\infty=\widetilde{\mathrm{PPT}}_\infty$.}
    \label{fig:PPT_k}
\end{figure}

It is interesting to note, that the "sandwich constraints" in \eqref{eq:PPTk} and \eqref{eq:PPTktilde} are tensor stable, i.e.
\begin{align}\label{eq:tensor_stable}
    -A \le B \le A \; , -C \le D \le C \nonumber \\
    \implies - A \otimes C  \le B \otimes D  \le A \otimes C , 
\end{align}
which was already proven in \cite{fangfawzi}, making $\PPTk$ closed under tensor products.

We can now define the following convex optimization over the set $\PPTktilde$, namely
\begin{align}\label{eq:E_Nk_tilde1}
    \Tilde{E}^{1}_{N,k}(\rho) := \min_{\sigma \in \PPTktilde} D(\rho \Vert \sigma).
\end{align}
Note, that $\Tilde{E}^{1}_{N,0}(\rho) = E_R(\rho)$.

\begin{theorem}\label{theorem1}
    $\Tilde{E}^{1}_{N,k}$, as defined in equation \eqref{eq:E_Nk_tilde1}, yields an upper bound on the asymptotic relative entropy of entanglement
\begin{align}\label{eq:theorem1}
    \AREEP\le \Tilde{E}^{1}_{N,k}.
\end{align}
\end{theorem}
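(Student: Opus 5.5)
The plan is to show that every element of $\PPTktilde$ is dominated, up to a constant factor depending only on $k$, by a genuine normalized PPT state. Combined with the tensor stability \eqref{eq:tensor_stable}, this gives a bound on $E_R(\rho^{\otimes n})$ with only an additive constant. That constant vanishes after dividing by $n$.

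\emph{Step 1 (domination lemma).} Let $\sigma=\omega_0\in\PPTktilde$ with witnesses $\omega_1,\dots,\omega_k$. First note that every $\omega_i$ is positive semidefinite. For $i=0$ this is assumed, and for $i\ge1$ it follows from $-\omega_i\le\omega_i$, which is implied by the sandwich constraint. Taking traces in $\omega_i^\Gamma\le\omega_{i+1}$ gives $\Tr\omega_i\le\Tr\omega_{i+1}\le\dots\le\Tr\omega_k\le1$. Now define
\begin{align}
 \tau:=\sum_{i=0}^{k}\bigl(\omega_i+\omega_i^\Gamma\bigr).
\end{align}
Since $\Gamma$ is an involution, $\tau^\Gamma=\tau$, so it suffices to show $\tau\ge0$. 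In fact we get more. Regroup
\begin{align}
 \tau-\omega_0=\sum_{i=0}^{k-1}\bigl(\omega_i^\Gamma+\omega_{i+1}\bigr)+\omega_k^\Gamma .
\end{align}
Each bracket is $\ge0$ by $-\omega_{i+1}\le\omega_i^\Gamma$, and $\omega_k^\Gamma\ge0$ by the extra $\kappa$-type constraint. Hence $\tau\ge\omega_0\ge0$. This is the only place where the extra constraint is used, and it is exactly why $\PPTktilde$ rather than $\PPTk$ appears in the statement. Moreover $\Tr\tau=2\sum_i\Tr\omega_i\le2(k+1)$. Therefore $\hat\tau:=\tau/\Tr\tau$ is a PPT state satisfying $\sigma\le c_k\hat\tau$ with $c_k:=2(k+1)$.

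\emph{Step 2 (one-copy consequence).} By operator monotonicity of the logarithm, $\sigma\le c_k\hat\tau$ implies $\log\sigma\le\log\hat\tau+\log c_k$ on the support of $\sigma$. This gives $D(\rho\|\hat\tau)\le D(\rho\|\sigma)+\log c_k$. If $D(\rho\|\sigma)=\infty$ there is nothing to show. Otherwise $\operatorname{supp}\rho\subseteq\operatorname{supp}\sigma\subseteq\operatorname{supp}\hat\tau$, so the inequality is meaningful. Hence $E_R(\rho)\le D(\rho\|\sigma)+\log c_k$ for every $\sigma\in\PPTktilde$.

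\emph{Step 3 (tensorization and regularization).} Take $\sigma\in\PPTktilde$ attaining, or nearly attaining, the minimum in \eqref{eq:E_Nk_tilde1}. We claim $\sigma^{\otimes n}\in\PPTktilde$ for the bipartition $A^n:B^n$, with witnesses $\omega_i^{\otimes n}$. The sandwich constraints follow by repeatedly applying \eqref{eq:tensor_stable}, using that $(X\otimes Y)^\Gamma=X^\Gamma\otimes Y^\Gamma$. The remaining two conditions are also immediate: $\Tr\omega_k^{\otimes n}\le1$, and $(\omega_k^{\otimes n})^\Gamma=(\omega_k^\Gamma)^{\otimes n}\ge0$. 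Step 2 applied to $\rho^{\otimes n}$ then yields
\begin{align}
 E_R(\rho^{\otimes n})\le D(\rho^{\otimes n}\|\sigma^{\otimes n})+\log c_k=n\,D(\rho\|\sigma)+\log\bigl(2(k+1)\bigr).
\end{align}
Dividing by $n$ and letting $n\to\infty$ gives $\AREEP(\rho)\le D(\rho\|\sigma)$, and minimizing over $\sigma$ proves \eqref{eq:theorem1}.

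The main obstacle is Step 1: finding a PPT operator that dominates $\omega_0$ with a trace bounded independently of $n$. The symmetrized telescoping sum $\tau$ above resolves it, because $c_k$ does not depend on $n$. The only remaining points are checking that the minimum in \eqref{eq:E_Nk_tilde1} is attained, or else working with an infimum, and handling the support conditions.
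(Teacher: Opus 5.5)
Your proof is correct and is essentially the paper's argument: the paper also builds an $n$-copy PPT operator $T_n=\omega_0^{\otimes n}+(\omega_1^\Gamma)^{\otimes n}+\omega_2^{\otimes n}+\dots$ from the tensor-powered witnesses, then uses $T_n\ge\omega_0^{\otimes n}$, antimonotonicity and the scaling of $D$, and removes the $n$-independent constant by regularizing. The only difference is that your symmetrized sum $\sum_i(\omega_i+\omega_i^\Gamma)$ is manifestly $\Gamma$-invariant, so one positivity check replaces the paper's parity case split, at the harmless cost of the constant $\log(2(k+1))$ instead of $\log(k+1)$.
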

The proof is given in the End Matter. As the sets $\PPTktilde$ become larger with increasing $k$ and the objective in \eqref{eq:E_Nk_tilde1} is a minimization, the upper bounds on $\AREEP$ can only improve, proving the first main result \eqref{eq:main_result1}.

Equation \eqref{eq:E_Nk_tilde1} exhibits several nice properties. First, it provides a single-letter formula and therefore does not require optimization over multiple copies of the state. Moreover, the feasible set is semidefinite-program representable, and the resulting problem is a convex optimization problem, making its computation tractable. Although it initially provides only upper bounds on AREEP, it nevertheless offers a new way to computing this quantity exactly in certain cases.
\begin{lemma}
    The regularization of $\Tilde{E}^{1}_{N,k}$ yields AREEP,
    \begin{align}
        E_R^\infty(\rho)&=\lim_{n\to\infty}\frac1n
                         \Tilde{E}^{1}_{N,k}(\rho^{\otimes n}). 
    \end{align}
\end{lemma}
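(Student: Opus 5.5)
Define $f_n := \Tilde{E}^{1}_{N,k}(\rho^{\otimes n})$. The plan is to squeeze $\frac1n f_n$ between two quantities that both regularize to $\AREEP$.

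\emph{Upper side.} By the inclusion $\widetilde{\PPT}_0 \subseteq \PPTktilde$ from \eqref{eq:PPTk_inclusions}, applied on the system $A^n:B^n$, we get
\begin{align}
f_n \le E_R(\rho^{\otimes n}).
\end{align}
Dividing by $n$ and taking $\limsup$ gives
\begin{align}
\limsup_{n\to\infty} \tfrac1n f_n \le \AREEP(\rho).
\end{align}

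\emph{Lower side.} Apply Theorem~\ref{theorem1} to the state $\rho^{\otimes n}$ itself. This gives
\begin{align}
\AREEP(\rho^{\otimes n}) \le f_n.
\end{align}
It remains to use the standard identity $\AREEP(\rho^{\otimes n}) = n\,\AREEP(\rho)$. This holds because
\begin{align}
\lim_{m\to\infty}\tfrac1m E_R(\rho^{\otimes nm}) = n \lim_{m\to\infty}\tfrac{1}{nm}E_R(\rho^{\otimes nm}),
\end{align}
and the limit in \eqref{eq:areep} exists, so it is the same along the subsequence $nm$. Existence of that limit follows from subadditivity of $E_R$ together with Fekete's lemma; subadditivity holds because tensor products of PPT states are PPT. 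Hence $\frac1n f_n \ge \AREEP(\rho)$ for every $n$.

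\emph{Conclusion.} The two bounds together give $\lim_n \frac1n f_n = \AREEP(\rho)$. In particular the limit exists, so no separate subadditivity argument for $\Tilde{E}^{1}_{N,k}$ is needed.

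The statement carries no real difficulty beyond Theorem~\ref{theorem1}, which we may assume. The only point to check is that Theorem~\ref{theorem1} applies to arbitrary finite-dimensional bipartite states. This includes $\rho^{\otimes n}$ with the cut $A^n:B^n$, and indeed its statement is not restricted to single copies. One could instead prove existence of the limit directly, using tensor stability \eqref{eq:tensor_stable} to show that $\PPTktilde$ is closed under tensor products. That would make $f_n$ subadditive, but the sandwich argument above makes this unnecessary.
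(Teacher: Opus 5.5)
Your proof is correct and is essentially the paper's argument: apply the sandwich $\AREEP \le \Tilde{E}^{1}_{N,k} \le E_R$ (Theorem~\ref{theorem1} together with $\widetilde{\PPT}_0\subseteq\PPTktilde$) to $\rho^{\otimes n}$, divide by $n$, and let $n\to\infty$. You also make explicit two steps that the paper's two-line proof leaves implicit: the identity $\AREEP(\rho^{\otimes n})=n\,\AREEP(\rho)$, obtained via subadditivity and Fekete's lemma, and the existence of $\lim_n \frac1n\Tilde{E}^{1}_{N,k}(\rho^{\otimes n})$, obtained from the squeeze.
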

\textit{Proof:} We have
\begin{align}
    \AREEP (\rho) \le \Tilde{E}^{1}_{N,k} (\rho) \le E_R (\rho). 
\end{align}
Regularizing this yields
\begin{align}
    \AREEP (\rho) \le \lim_{n\to\infty}\frac1n \Tilde{E}^{1}_{N,k}(\rho) \le \AREEP(\rho),
\end{align}
which proves the claim. $\Box$

If, for a given state $\rho$, $\Tilde{E}^{1}_{N,k}$ is additive at some level $k$, then
$\AREEP(\rho)=\Tilde{E}^{1}_{N,k}(\rho).$ Importantly, this may occur even when the basic upper bound $E_R=\Tilde{E}^{1}_{N,0}$ is nonadditive. 

We find two classes of states, for which exactly this happens. Both classes contain all Werner states. 
\paragraph*{Canonical symmetric-antisymmetric family.}
Let $d\geq2$, and for $0\leq i<j\leq d-1$, take the symmetric and antisymmetric states
\begin{align}
  |s_{ij}\rangle
  &:=\frac{|ij\rangle+|ji\rangle}{\sqrt{2}},
  &
  S_{ij}
  &:=|s_{ij}\rangle\langle s_{ij}|,
  \label{eq:main_supp-sij}\\
  |a_{ij}\rangle
  &:=\frac{|ij\rangle-|ji\rangle}{\sqrt{2}},
  &
  A_{ij}
  &:=|a_{ij}\rangle\langle a_{ij}|,
  \label{eq:main_supp-aij}\\
  D_i&:=|ii\rangle\langle ii|.
\end{align}
The first family consists of convex combinations of these basis states
\begin{equation}
  \rho_{\text{CSA}}
  =
  \sum_{i=0}^{d-1}p_iD_i
  +
  \sum_{0\leq i<j\leq d-1}
  \bigl(p_{ij}^{+}S_{ij}+p_{ij}^{-}A_{ij}\bigr),
  \label{eq:main_supp-rho-family}
\end{equation}
where all coefficients are nonnegative and
\begin{equation}
  \sum_{i=0}^{d-1}p_i
  +
  \sum_{0\leq i<j\leq d-1}
  (p_{ij}^{+}+p_{ij}^{-})
  =1.
\end{equation}

\paragraph*{Hyperoctahedral family.}
Consider the following projectors
\begin{align}
    &\Pi_0 = \ketbra{\Phi_d}{\Phi_d}, \\
    &\Pi_1 = \frac{\id + F}{2} - \Delta,  \\
    &\Pi_2 = \frac{\id - F}{2},\\
    & \Pi_3 = \Delta - \ketbra{\Phi_d}{\Phi_d},
\end{align}
with $\ketbra{\Phi_d}{\Phi_d}$ being the $d$-dimensional maximally entangled state, $F$ the swap operator $F = \sum_{ij=1}^d \ketbra{ji}{ij}$ and $\Delta := \sum_i D_i$. In \cite{nechita2026random} the following class of states was introduced:
\begin{align}\label{eq:main_HO_family}
    \rho_{\text{HO}} = \sum_{i=0}^{3} p_i \frac{\Pi_i}{\Tr \Pi_i}.
\end{align}

\begin{lemma}\label{lemma:family}
    For all $\rho_{\text{CSA}}$ and $\rho_{\text{HO}}$ of the form \eqref{eq:main_supp-rho-family} and \eqref{eq:main_HO_family}, respectively, $\AREEP$ can be calculated exactly by the convex optimization $\Tilde{E}^{1}_{N,1}$, i.e.
    \begin{align}
        \AREEP(\rho_{\text{CSA/HO}}) = \Tilde{E}^{1}_{N,1}(\rho_{\text{CSA/HO}}).
    \end{align}
\end{lemma}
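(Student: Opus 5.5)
Since Theorem~\ref{theorem1} already gives $\AREEP\le\Tilde{E}^{1}_{N,1}$, it suffices to show the reverse inequality $\Tilde{E}^{1}_{N,1}(\rho)\le\AREEP(\rho)$ for both families. We do this by exhibiting a matching lower bound on $\AREEP$ that is additive.

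The lower bound will come from the dual (Lagrangian) side of the one-copy problem. For any operator $W$ with $\Tr[W\sigma]\le 1$ on the relevant PPT-type set, one has $D(\rho\|\sigma)\ge \Tr[\rho\log\rho]-\log\Tr[\ldots]$-type bounds. Concretely, I plan to use the standard estimate
\[
D(\rho\|\sigma)\ge -S(\rho)-\Tr[\rho\log\sigma]
\]
together with operator concavity of $\log$ to reduce the problem to bounding $\sup_{\sigma}\Tr[\rho\log\sigma]$. For Werner-type symmetric states, twirling reduces this to a finite-dimensional linear-programming problem.

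\textbf{Step 1 (symmetry reduction).} Both families are invariant under a group: for $\rho_{\text{CSA}}$, diagonal phase unitaries $U\otimes U$ with $U=\mathrm{diag}(e^{i\theta_j})$, combined with the swap; for $\rho_{\text{HO}}$, the hyperoctahedral group of signed permutations acting as $U\otimes U$ (or $U\otimes\bar U$). The set $\PPT_1$-tilde is invariant under local unitaries compatible with $\Gamma$, and $D$ is jointly convex. Hence the optimal $\sigma$ in $\Tilde{E}^{1}_{N,1}$ may be taken twirled, so $\sigma$, $\omega_1$ and their partial transposes all lie in a commutative algebra spanned by the projectors $D_i,S_{ij},A_{ij}$, respectively $\Pi_0,\dots,\Pi_3$. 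Partial transposition maps this algebra (or a closely related one, e.g.\ containing $\ketbra{\Phi_d}{\Phi_d}$ blocks) into a small algebra. The result is that $\Tilde{E}^{1}_{N,1}(\rho)$ becomes a classical relative entropy minimization over a polytope in few variables, with an explicit optimizer $\sigma^\star$.

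\textbf{Step 2 (additive lower bound).} Extracting from the KKT conditions of this reduced problem, I would construct a dual witness: an operator $X$ with $\rho\log$-gradient form, $X=\sigma^{\star\,-1}$-like in the commutative algebra, satisfying
\[
\|X^\Gamma\|_\infty\le 1 \quad\text{or}\quad \Tr[X\tau]\le 1\ \ \forall\,\tau\in\PPT .
\]
The key property is that the witnessing condition is a sandwich/norm condition, which is tensor stable by \eqref{eq:tensor_stable}; hence $X^{\otimes n}$ witnesses for $\rho^{\otimes n}$. Standard convexity arguments (as in the Rains bound / Audenaert et al.\ Werner computation) then give
\[
E_R(\rho^{\otimes n})\ge n\,\Tilde{E}^{1}_{N,1}(\rho),
\]
and thus $\AREEP\ge\Tilde{E}^{1}_{N,1}$. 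Alternatively, and likely cleaner, I would compare with the known lower bound of Refs.~\cite{WangJingZhu2025,fangfawzi} (regularization-free lower bounds built from $\PPT_k$-type duality) and show it coincides with our reduced value.

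\textbf{Main obstacle.} The main difficulty is verifying that the dual witness coming from the $\PPTktilde$ problem at level $1$ satisfies a tensor-stable condition with no gap. The extra freedom $\omega_1$ (with $\omega_1^\Gamma\ge0$) is exactly what should close the gap that makes $E_R$ nonadditive for Werner states. I would first work out the Werner case explicitly, where Ref.~\cite{ADVW2002} supplies $\AREEP$, to check that the reduced optimum equals that value. I would then generalize: to the CSA family via its block structure on each $\{i,j\}$ pair, and to the HO family via its four-projector algebra.
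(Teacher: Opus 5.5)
\textbf{There is a genuine gap.} Your framing is right: Theorem~\ref{theorem1} gives one direction, so you need a matching lower bound that survives tensor powers. But the proposal never supplies the step that closes the gap.

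\textbf{Your primary route does not deliver it.} A multiplicative witness condition such as $\|X^\Gamma\|_\infty\le1$ only helps once you linearize in $\sigma$, since $\sigma$ ranges over non-product PPT states on $A^n:B^n$. Your starting estimate $D(\rho\Vert\sigma)\ge-S(\rho)-\Tr[\rho\log\sigma]$ is an identity and linearizes nothing. A linear witness can only enter through a variational bound $D(\rho\Vert\sigma)\ge\Tr[\rho\log\omega]-\log\Tr[\sigma\omega]$, i.e.\ through the measured relative entropy $D_{\mathbb M}$. Optimizing over tensor-stable witnesses then gives a measured-relative-entropy bound over a relaxation of PPT, such as the Fang--Fawzi bound $\inf_{\tau\in\PPT_2}D_{\mathbb M}(\rho\Vert\tau)\le\AREEP(\rho)$ that the paper uses.

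Nothing in your sketch explains why such a bound should equal the Umegaki minimum over the smaller set $\widetilde{\PPT}_1$. Your ``alternative'' of comparing with Fang--Fawzi and showing the values coincide is exactly the content of the lemma, and it is left unproved.

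\textbf{The KKT/explicit-optimizer plan does not reach the CSA family.} After twirling, neither $\sigma^\Gamma$ nor the witness $\omega_1$ lies in a commutative algebra. The witness takes the form $\sum_{i,j}Z_{ij}\ketbra{ii}{jj}+\sum_{i<j}y_{ij}(\ketbra{ij}{ij}+\ketbra{ji}{ji})$ with a full positive $d\times d$ block $Z$. So the reduced problem is an SDP-constrained relative-entropy minimization in $O(d^2)$ parameters with no closed-form $\sigma^\star$, not a classical problem over a small polytope. Checking Werner states and then ``generalizing via the block structure'' is a plan, not a proof.

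\textbf{The missing idea is that you never need the optimal value.}
\begin{itemize}
\item \emph{Symmetry reduction.} Both $\PPT_2$ and $\widetilde{\PPT}_1$ are invariant under the twirl. By data processing for $D$ and $D_{\mathbb M}$, both infima can be restricted to $\tau$ in the invariant algebra, $\mathcal A_d$ respectively $\mathcal B_d$.
\item \emph{Commutativity.} Such $\tau$ commute with $\rho$, so $D_{\mathbb M}(\rho\Vert\tau)=D(\rho\Vert\tau)$.
\item \emph{Equality of feasible sets.} The two feasible sets coincide on that algebra.
\begin{itemize}
\item For CSA, $\PPT_2\cap\mathcal A_d=\widetilde{\PPT}_1\cap\mathcal A_d$. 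Twirl a $\PPT_2$ witness $Y$ into the form above and replace $y_{ij}$ by $y'_{ij}=\max\{y_{ij},|Z_{ij}|\}$. This gives $Y'\ge Y$, $(Y')^\Gamma\ge0$ and $\Tr Y'=\|Y^\Gamma\|_1\le1$.
\item For HO, every $0\le X\in\mathcal B_d$ satisfies $|X^\Gamma|^\Gamma\ge0$. So $Y=|X^\Gamma|$ gives $\PPT_1\cap\mathcal B_d=\widetilde{\PPT}_1\cap\mathcal B_d$.
\end{itemize}
\end{itemize}
The Fang--Fawzi lower bound and the $\widetilde{\PPT}_1$ upper bound then agree without either ever being computed.
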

This gives a single letter formula for AREEP for both multiparameter families \eqref{eq:main_supp-rho-family} and \eqref{eq:main_HO_family}. We show this in the Supplemental Material \ref{app:exact_areep_family}, by matching a known lower bound on $\AREEP$ from \cite{fangfawzi} to the newly established upper bound for this class of states. To the best of our knowledge, they are the broadest multiparameter families for which $\AREEP$ is known to admit an exact single-copy characterization.

Coming back to the bound from Theorem \ref{theorem1}, it turns out, that the same proof extends directly to the sandwiched R\'enyi divergences $ D_\alpha$ for all $\alpha\in[1/2,\infty)$ \cite{muller2013quantum,wilde2014strong,frank2013monotonicity}. 
Defining
\begin{align}
D_\alpha(\rho\Vert\sigma)
&:=
\frac{1}{\alpha-1}
\log
\operatorname{Tr}\!\left[
\left(
\sigma^{\frac{1-\alpha}{2\alpha}}
\rho\,
\sigma^{\frac{1-\alpha}{2\alpha}}
\right)^\alpha
\right],
\qquad \nonumber \\
&\alpha\in(0,1)\cup(1,\infty).\\
 E_{R,\alpha}^{\infty}(\rho)
&:=\lim_{n\to\infty}\frac{1}{n}\min_{\tau\in\PPT}
 D_\alpha(\rho^{\otimes n}\Vert\tau),\\
\qquad
\widetilde E^\alpha_{N,k}(\rho)
&:=\min_{\sigma\in\widetilde{\PPT}_k}
 D_\alpha(\rho\Vert\sigma),
\end{align}
the same proof as in Theorem \ref{theorem1} yields
\begin{align}
E_{R,\alpha}^{\infty}(\rho)
\leq \widetilde E^\alpha_{N,k}(\rho), \quad \forall \alpha \in [1/2,\infty).
\end{align}
Indeed, besides of the tensor-stable construction of the PPT trial state $T_n$, the proof only uses additivity under tensor products, the scaling relation
$ D_\alpha(\rho\Vert c\sigma)= D_\alpha(\rho\Vert\sigma)-\log c$, and antimonotonicity in the second argument. Hence the same estimate gives an additive $\log(k+1)$ correction, which vanishes upon regularization.

\section{k-multinegative states}
Although \eqref{eq:E_Nk_tilde1} is already a single copy convex optimisation, we can find an even easier to compute upper bound by constructing a feasible point in $\PPTktilde$. For this, we first define the \textit{absolute partial transposition} $\mathcal{N}$ and the \textit{k-fold absolute partial transposition} $\mathcal{N}_k$ of an operator $X$ by
\begin{align}\label{eq:Nmap}
    \mathcal{N}(X) = \abs{X}^\Gamma, \quad \mathcal{N}_k(X) = \underbrace{\mathcal{N} \circ \mathcal{N} \circ \dots \circ \mathcal{N}}_{k \; \text{times}} \;(X)  .
\end{align}

Applying $\mathcal{N}$ on a quantum state once just corresponds to the ordinary partial transposition $\mathcal{N}_1(\rho) = \rho^\Gamma$, whereas applying it twice gives the operator $\mathcal{N}_2(\rho) = \abs{\rho^\Gamma}^\Gamma$. This motivates the definition of \textit{$k$-multinegative} states.
\begin{definition}
    A quantum state $\rho$ is called \textit{$k$-multinegative}, if $\mathcal{N}_k(\rho) \not \ge 0$, where $\mathcal{N}_k$ is defined in equation \eqref{eq:Nmap}.
\end{definition}

It is already known, that $E_N=\log \Tr \mathcal{N}_2(\rho)$ is a lower bound on exact entanglement cost under PPT operations $E_{\text{c,PPT}}^{\text{exact}}$ \cite{AudenaertPlenioEisert2003}, and that for states, which are not binegative ($\mathcal{N}_2(\rho) \ge 0$), it is an upper bound for $\AREEP$ \cite{ADVW2002}. This can be generalised by the following Lemma.
\begin{lemma}\label{lemma:k-multinegativity}
    The k-multinegativity $\log \Tr \mathcal{N}_k(\rho)$ upper bounds $\AREEP$ and $E_{\text{c,PPT}}^{\text{exact}}$, if ${\mathcal{N}}_k(\rho) \ge 0$, i.e.
    \begin{align}
        {\mathcal{N}}_k(\rho) \ge 0 \implies \AREEP(\rho) \le E_{\text{c,PPT}}^{\text{exact}}(\rho) \le \log \Tr \mathcal{N}_k(\rho).
    \end{align}
\end{lemma}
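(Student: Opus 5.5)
The first inequality, $\AREEP(\rho)\le E_{c,\mathrm{PPT}}^{\mathrm{exact}}(\rho)$, is already recorded in Eq.~\eqref{eq:areep-cost}, so only the upper bound $E_{c,\mathrm{PPT}}^{\mathrm{exact}}(\rho)\le\log\Tr\mathcal{N}_k(\rho)$ needs proof. By Eq.~\eqref{eq:cost-bracket} it suffices to show $\kappa_q(\rho)\le\Tr\mathcal{N}_k(\rho)$ for a suitable level $q$. We do this by exhibiting a feasible point of the $\kappa$-hierarchy built directly from the iterated absolute partial transpositions.

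Set $S_{i}:=\abs{\mathcal{N}_{i+1}(\rho)}$ for $i=0,\dots,k-2$, and note that $S_{-1}=\rho=\mathcal{N}_0(\rho)$. Then $S_{i-1}^\Gamma=\abs{\mathcal{N}_{i}(\rho)}^\Gamma=\mathcal{N}_{i+1}(\rho)$ for $i\ge1$, and $S_{-1}^\Gamma=\rho^\Gamma=\mathcal{N}_1(\rho)$ since $\rho\ge0$ gives $\abs{\rho}=\rho$. Hence each sandwich constraint reads $-\abs{X}\le X\le\abs{X}$ for a Hermitian $X=\mathcal{N}_{i+1}(\rho)$. This holds by the Jordan decomposition $X=X_+-X_-$ with $\abs{X}=X_++X_-$. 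Hermiticity is preserved along the way, because $\abs{X}$ and partial transposition both map Hermitian operators to Hermitian ones.

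To close the hierarchy we take $q=k-1$. The last variable is $S_{k-2}=\abs{\mathcal{N}_{k-1}(\rho)}$, and the terminal constraint requires $S_{k-2}^\Gamma=\mathcal{N}_k(\rho)\ge0$, which is exactly the hypothesis. The objective is $\Tr S_{k-2}=\Tr\abs{\mathcal{N}_{k-1}(\rho)}$. Partial transposition preserves the trace, so this equals $\Tr\abs{\mathcal{N}_{k-1}(\rho)}^\Gamma=\Tr\mathcal{N}_k(\rho)$. Therefore $\kappa_{k-1}(\rho)\le\Tr\mathcal{N}_k(\rho)$, and Eq.~\eqref{eq:cost-bracket} yields the claim. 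The cases $k=1,2$ are the edge cases. For $k=1$ the state is PPT, and one takes $\widetilde{\PPT}_0$ directly to get zero cost, matching $\log\Tr\rho^\Gamma=0$. For $k=2$ we use $q=1$ with $S_0=\abs{\rho^\Gamma}$, recovering Eq.~\eqref{eq:zero-binegativity}.

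As a cross-check independent of the cost bound, the same chain normalized by $t=\Tr\mathcal{N}_k(\rho)$ shows $\rho/t\in\widetilde{\PPT}_{k-1}$. Its witnesses are $\omega_i=S_{i-1}/t$, the last one has trace $1$, and its partial transpose is $\mathcal{N}_k(\rho)/t\ge0$. Theorem~\ref{theorem1} then gives $\AREEP(\rho)\le D(\rho\Vert\rho/t)=\log t$ directly.

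The main obstacle is bookkeeping rather than substance. We must align the index conventions of $\kappa_q$ and $\widetilde{\PPT}_k$ with $\mathcal{N}_k$, and make sure the terminal PPT constraint falls exactly on $\mathcal{N}_k(\rho)$ rather than one level off.
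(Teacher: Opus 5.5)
Your proposal is correct and is essentially the paper's proof. The paper uses exactly this feasible point $S_i=\abs{S_{i-1}^\Gamma}$ (which coincides with your $\abs{\mathcal{N}_{i+1}(\rho)}$) for $\kappa_{k-1}$, with $\mathcal{N}_k(\rho)\ge0$ supplying the terminal PPT constraint and $\Tr S_{k-2}=\Tr\mathcal{N}_k(\rho)$ giving the objective.

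Your extras are also sound. The direct AREEP bound via $\rho/t\in\widetilde{\PPT}_{k-1}$ and Theorem~\ref{theorem1} avoids the detour through the cost. For $k=1$, the zero cost is better justified by $\kappa_1(\rho)\le1$ with $S_0=\rho^\Gamma$ than by citing $\widetilde{\PPT}_0$, since the latter only controls the relative entropy, not the cost.
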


\begin{proof}
    Consider $k \ge 2$. We show that, $\Tr \mathcal{N}_k(\rho)$ upper bounds $\kappa_{k-1}(\rho)$ from Eq.~\eqref{eq:kappa_hierarchy}, if ${\mathcal{N}}_k(\rho) \ge 0 $.  For this, construct the feasible set of operators $\{ S^\ast_i \}_{i=0}^{k-2}$, with
    \begin{align}
    S^\ast_{i} &= \abs{S_{i-1}^{\ast\Gamma}} \quad \forall \; i \in \{0,\dots, k-2\}, \; \; S^\ast_{-1} = \rho.
    \end{align}
    It is easy to verify, that these operators fulfill the sandwich constraints in Eq.~\eqref{eq:kappa_hierarchy}. Furthermore, if ${\mathcal{N}}_k(\rho) \ge 0 $, the PPT constraint $S_{k-2}^{\ast\Gamma} \ge 0$ is fulfilled as well. The set $\{ S^\ast_i \}_{i=0}^{k-2}$ therefore forms a feasible point for $\kappa_{k-1}(\rho)$. This implies, that $\Tr S_{k-2}^{\ast} = \Tr \mathcal{N}_k(\rho) \ge \kappa_{k-1}(\rho)$. Since $\log \kappa_{k-1}(\rho)$ upper bounds $E_{\text{c,PPT}}^{\text{exact}}$, which itself upper bounds $\AREEP$, Lemma \ref{lemma:k-multinegativity} follows.
\end{proof}

Apart from Lemma \ref{lemma:k-multinegativity}, which gives $k$-multinegativity a partial operational interpretation by providing an upper bound on $\AREEP$ and $E_{\text{c,PPT}}^{\text{exact}}$, it is not clear whether multinegative states have any further physical relevance. Nonetheless, they do appear to be useful as counterexamples to a conjecture stated in Ref.~\cite{LamiMeleRegula2025} (Conjecture~S33). They conjecture a collapse of both hierarchies, or more explicitly that
\begin{align}
E_{\text{c,PPT}}^{\text{exact}}(\rho) \overset{?}{=} E_{\chi,2}(\rho) \overset{?}{=} E_{\kappa,3}(\rho),
\end{align}
and additionally state the stronger conjecture that $E_{\chi,2}(\rho) = E_{\kappa,2}(\rho)$, since no counterexamples were found. It turns out that certain $k$-multinegative states can disprove this conjecture and stronger versions of it.

The intuition for considering such states is the following. In order to observe a difference between two levels of the $\kappa$-hierarchy, the optimizer $\tau$ must satisfy $\tau \in \widetilde{\PPT}_{q}$ while $\tau \notin \widetilde{\PPT}_{q-1}$. Since the property $\mathcal{N}_q(\tau) \ge 0$ is a sufficient condition for a suitably normalized $\tau$ to belong to $\widetilde{\PPT}_{q-1}$, as shown in Lemma \ref{lemma:k-multinegativity}, the optimizer must have sufficiently high $k$-multinegativity in order for a gap to appear at a higher level of the hierarchy. It therefore seems plausible to consider a state $\rho$ that is itself $k$-multinegative.

In the Supplemental Material, we present for each choice of $k\geq 3$ a counterexample $\rho_k$, for which
\begin{align}
   \chi_{k-1}(\rho_k)&=2^k \binom{k}{\lfloor k/2\rfloor}^{-1}, \\
 \chi_{k-2}(\rho_k) &\leq (2^k-1)\binom{k}{\lfloor k/2\rfloor}^{-1},
\end{align}
such that $\chi_{k-1}(\rho_k) - \chi_{k-2}(\rho_k) \geq \binom{k}{\lfloor k/2\rfloor}^{-1} > 0$.
This shows a difference in two levels of the hierarchy for arbitrary $k$. Furthermore, $\rho_k$ is $k$-multinegative. Thus, the family of counterexamples yields examples of state with arbitrarily high levels of multinegativity. It is interesting to note, that these states do not show a non-collapse in the $\kappa$-hierarchy. It is therefore still an open problem, if the $\kappa$-hierarchy exhibits a finite single letter formula for exact entanglement cost.

\section{Discussion and Conclusions}

Our results clarify both the possibilities and the limitations of single-copy approaches to asymptotic entanglement quantities under PPT operations. On the one hand, we construct a hierarchy that provides a computable one-copy upper bound on the asymptotic relative entropy of entanglement w.r.t.~PPT states (AREEP). It relies only on tensor-stable semidefinite constraints and therefore bypasses the explicit many-copy regularization appearing in the definition of AREEP. We demonstrated that already the first nontrivial level can be strictly tighter than the one-copy PPT relative entropy of entanglement, and we find large multiparameter families in \eqref{eq:main_supp-rho-family} and \eqref{eq:main_HO_family} for which it is in fact exact. Thus, although regularization cannot in general be discarded by a naive one-copy replacement, nontrivial single-copy characterizations remain possible for broad classes of states.

At the same time, our counterexamples show that an analogous hierarchy for the exact PPT entanglement cost defined in Ref.~\cite{LamiMeleRegula2025} do not show a finite collapse, contrary to prior belief. In particular, we constructed a family of $k$-multinegative states, i.e., states which remain negative after repeated application of matrix absolute value and partial transposition, which exhibit a gap in that hierarchy for arbitrarily high order. Consequently, no fixed finite level of this hierarchy suffices for all quantum states. This does not exclude the possibility that the exact PPT cost admits a different single-letter characterization.

The iterated absolute partial transpose turns out to be a useful tool. For states, which are not binegative, a simple expression in terms of logarithmic negativity exists. Otherwise, when the iterated absolute partial transpose turns positive at some level, its logarithmic trace gives an explicit upper bound on both the AREEP and exact PPT cost. 

Several questions concerning the upper-bound hierarchy remain open. In particular, it is natural to ask how tightly it approximates AREEP for generic states, whether further families can be identified for which a finite level is exact, and how the hierarchy behaves as its level increases. The extension of the construction to sandwiched R\'enyi divergences for all $\alpha\geq 1/2$ suggests that the underlying tensor-stable mechanism is not specific to the ordinary relative entropy. Exploring analogous constructions for other regularized entanglement quantities may therefore provide a broader route to tractable one-copy bounds even in settings where an exact finite single-letter formula is unavailable.

\textit{Note added.---} While completing this work, we became aware of related work in \cite{ao2026ppt}, where by a different method and proof the authors found an upper bound on AREEP, which equals the upper bound in \eqref{eq:theorem1} for $k \to \infty$. 

\begin{acknowledgments}
The numerical search for multinegative states and the identification of candidate state families were assisted by ChatGPT (OpenAI, GPT-5.6 Sol; accessed August 2026). The initial draft of the paper was written by hand and iteratively improved with the help of ChatGPT 5.6 Sol, accessed August 2026. All AI-assisted material was
reviewed and revised by the authors, who take full responsibility for the
content. This work has been supported by the Federal
Ministry of Research, Technology and Space (BMFTR Projects QR.N, Grant
No.~16KIS2202 and QSolid, Grant No.~13N16163). We also acknowledge
financial support by Deutsche Forschungsgemeinschaft (DFG, German Research
Foundation) under Germany's Excellence Strategy -- Cluster of Excellence
Matter and Light for Quantum Computing (ML4Q) EXC 2004/1 -- 390534769.
\end{acknowledgments}

\bibliography{aps}

\section*{End matter}

We prove Theorem \ref{theorem1}, i.e. that $\AREEP \le \Tilde{E}^{1}_{N,k}$ for all $k$.

\begin{proof}
Take a feasible point $\omega_0 \in \PPTktilde$. From \eqref{eq:PPTktilde} we know a set of operators $\{\omega_{i} \}_{i=1}^k$ exists, which fulfill
\begin{align}
    \Tr\omega_i \le 1 \label{eq:trace_omega_i} \\
    -\omega_{i+1} \le \omega_i^\Gamma \le \omega_{i+1}, \\
    \omega_k^\Gamma \ge 0, \label{eq:omega_k_ppt}
\end{align}
for $i = 0,\ldots,k-1$.
Because of tensor stability (Eq.~\eqref{eq:tensor_stable}) we furthermore have
\begin{align}
    \Tr\omega_i^{\otimes n} \le 1, \\
    -\omega_{i+1}^{\otimes n} \le (\omega_i^\Gamma)^{\otimes n} \le \omega_{i+1}^{\otimes n}. \label{eq:sandwich_n}
\end{align}
We can now construct the (unnormalised) trial state for $E_R(\rho^{\otimes n})$, namely
\begin{align*}
    T_n = \underbrace{\omega_0^{\otimes n}}_{\ge 0} + \underbrace{(\omega_1^\Gamma)^{\otimes n} +\omega_2^{\otimes n}}_{\ge 0}  + \dots + \begin{cases} (\omega_k^\Gamma)^{\otimes n}, & \text{$k$ odd,} \\ \omega_k^{\otimes n}, & \text{$k$ even.}\end{cases}
\end{align*}
$T_n$ is positive by construction, as $\omega_0$ is positive and then every two consecutive terms are positive by the sandwich constraint \eqref{eq:sandwich_n}. If $k$ is odd, the last remaining term $(\omega_k^\Gamma)^{\otimes n}$ is still positive by \eqref{eq:omega_k_ppt}. By the same argument $T_n$ can be shown to be PPT
\begin{align*}
    T_n^\Gamma = \underbrace{(\omega_0^\Gamma)^{\otimes n} + \omega_1^{\otimes n}}_{\ge 0}  + \dots + \begin{cases} \omega_k^{\otimes n}, & \text{$k$ odd}, \\ (\omega_k^\Gamma)^{\otimes n}, & \text{$k$ even.}\end{cases}
\end{align*}
Therefore, $t_n = T_n/\Tr(T_n)$ can be taken as a trial state in $E_R(\rho^{\otimes n})$,
\begin{align}
    E_R(\rho^{ \otimes n}) &\le D(\rho^{ \otimes n} \Vert t_n ) = D(\rho^{ \otimes n} \Vert T_n) + \log \Tr T_n \\
    &\le D(\rho^{ \otimes n} \Vert T_n) + \log (k+1)  \\
    &\le D(\rho^{ \otimes n} \Vert \omega_0^{\otimes n} ) + \log (k+1) \\
    & = nD(\rho\Vert \omega_0) + \log (k+1) .
\end{align}
The second inequality stems from \eqref{eq:trace_omega_i} and the third inequality follows from anti-monotonicity of the relative entropy ($T_n\ge \omega_0^{\otimes n} \implies D(\rho^{ \otimes n} \Vert T_n) \le D(\rho^{ \otimes n} \Vert \omega_0^{\otimes n} )$).
Regularizing (i.e., dividing by $n$ and taking the limit) then yields
\begin{align}
    \AREEP \le D(\rho\Vert \omega_0) \; \forall \; \omega_0 \in \PPTktilde.
\end{align}
Minimizing the right-hand side over $\omega_0\in\PPTktilde$ proves \eqref{eq:theorem1}. \end{proof}

\clearpage

\appendix
\onecolumngrid

\section{$\PPTk$ and $\PPTktilde$ set inclusions}\label{app:set_inclusions}

The sets $\PPTk$ and $\PPTktilde$ are defined in \eqref{eq:PPTk} and \eqref{eq:PPTktilde}. The inclusions
\begin{equation}
    \mathrm{PPT}\subseteq \mathrm{PPT}_k\subseteq\cdots
    \subseteq \mathrm{PPT}_1\subseteq\mathrm{PPT}_0
\end{equation}
have already been established in Ref.~\cite{WangJingZhu2025}, so it remains to verify the corresponding statements involving $\widetilde{\mathrm{PPT}}_k$. We claim that
\begin{equation}
    \widetilde{\mathrm{PPT}}_k
    \subseteq
    \widetilde{\mathrm{PPT}}_{k+1}.
\end{equation}
Indeed, let $\omega_0\in\widetilde{\mathrm{PPT}}_k$ with witnesses
$\omega_1,\ldots,\omega_k$. The sandwich constraints imply
$\omega_i\geq0$ for every $i$, while by definition
$\omega_k^\Gamma\geq0$. Setting
$\omega_{k+1}:=\omega_k^\Gamma$, we have
$\operatorname{Tr}\omega_{k+1}
=\operatorname{Tr}\omega_k\leq1$,
$(\omega_{k+1})^\Gamma=\omega_k\geq0$, and
\[
    -\omega_{k+1}
    \leq \omega_k^\Gamma
    \leq \omega_{k+1},
\]
so the extended tuple is feasible for
$\widetilde{\mathrm{PPT}}_{k+1}$. This proves
$\widetilde{\mathrm{PPT}}_0
\subseteq\widetilde{\mathrm{PPT}}_1
\subseteq\cdots\subseteq\widetilde{\mathrm{PPT}}_k$.

To see $\widetilde{\PPT}_\infty = \PPT_\infty$, note the following relation to the $\chi$-hierarchy from \cite{LamiMeleRegula2025}, which was already established \cite{WangJingZhu2025}
\begin{align}
    X \in \PPT_k \iff \chi_{k-1}(X) \le 1.
\end{align}
Analogously, by adding the PPT constraint $\omega_k^\Gamma \ge 0$, one arrives at
\begin{align}
    X \in \widetilde{\PPT}_k \iff \kappa_{k}(X) \le 1.
\end{align}
Since we know both hierarchies converge to the same value for $k \to \infty$ \cite{LamiMeleRegula2025}, we get
\begin{align}
    X \in \PPT_\infty \iff 2^{E_{c,\mathrm{PPT}}^{\mathrm{exact}}(X)} \le 1 \iff X \in \widetilde{\PPT}_\infty.
\end{align}
This proves equation \eqref{eq:PPTk_inclusions}.

\section{Exact $\AREEP$ for families of states}\label{app:exact_areep_family}

We start with the canonical symmetric-antisymmetric family \eqref{eq:main_supp-rho-family}. We prove that every state in this family satisfies
\begin{equation}
   \inf_{\tau\in\mathrm{PPT}_2}D_{\mathbb M}(\rho_{\text{CSA}}\Vert \tau)
  =
  E_R^\infty(\rho_{\text{CSA}})
  =
  \inf_{\tau\in\widetilde{\mathrm{PPT}}_1}
  D(\rho_{\text{CSA}}\Vert\tau).
  \label{eq:supp-main-result}
\end{equation}
Recall from Eqs.~\eqref{eq:PPTk} and \eqref{eq:PPTktilde} the definitions of the sets $\mathrm{PPT}_2$ and $ \widetilde{\mathrm{PPT}}_1$. Using the variational form of the trace norm, these sets can be written equivalently as
\begin{equation}
  \mathrm{PPT}_2
  :=
  \left\{
    X\geq0:
    \begin{array}{l}
      \text{there exists }Y\geq0\text{ such that}\\[-1mm]
      -Y\leq X^\Gamma\leq Y,\quad
      \|Y^\Gamma\|_1\leq1
    \end{array}
  \right\},
  \label{eq:supp-ppt2}
\end{equation}
and
\begin{equation}
  \widetilde{\mathrm{PPT}}_1
  :=
  \left\{
    X\geq0:
    \begin{array}{l}
      \text{there exists }Y\geq0\text{ such that}\\[-1mm]
      -Y\leq X^\Gamma\leq Y,\quad
      \operatorname{Tr}Y\leq1,\quad
      Y^\Gamma\geq0
    \end{array}
  \right\}.
  \label{eq:supp-ppt1-tilde}
\end{equation}
The measured relative entropy is given by \cite{fangfawzi}
\[
  D_{\mathbb M}(\rho\Vert\tau)
  :=
  \sup_{\mathcal M}
  D\bigl(\mathcal M(\rho)\Vert\mathcal M(\tau)\bigr),
\]
where the supremum is over all POVMs $\mathcal{M}$.
We use the lower bound from \cite{fangfawzi} and the upper bound from Eq.~\eqref{eq:theorem1},
\begin{equation}
  \inf_{\tau\in\mathrm{PPT}_2}D_{\mathbb M}(\rho\Vert \tau)
  \leq
  E_R^\infty(\rho)
  \leq
  \inf_{\tau\in\widetilde{\mathrm{PPT}}_1}
  D(\rho\Vert\tau).
  \label{eq:supp-general-bounds}
\end{equation}
From Eq.~\eqref{eq:PPTk_inclusions}, we also have that
\begin{equation}
  \widetilde{\mathrm{PPT}}_1
  \subseteq
  \mathrm{PPT}_2.
  \label{eq:supp-cone-inclusion}
\end{equation}

We now turn to proving \eqref{eq:supp-main-result}.

\medskip
\noindent\textit{Step 1: the canonical twirl.}
Let
\begin{equation}
  \mathcal A_d
  :=
  \operatorname{span}
  \{D_i,S_{ij},A_{ij}:
    0\leq i\leq d-1,\ 0\leq i<j\leq d-1\}.
  \label{eq:supp-canonical-algebra}
\end{equation}
It is spanned by mutually
orthogonal projectors.  For
$\boldsymbol{\theta}=(\theta_0,\ldots,\theta_{d-1})$, define
\[
  U_{\boldsymbol{\theta}}
  :=
  \sum_{i=0}^{d-1}
  e^{\mathrm{i}\theta_i}|i\rangle\langle i|
\]
and
\begin{align}
  \mathcal T_0(X)
  &:=
  \int_{[0,2\pi)^d}
  \frac{\mathrm d^d\boldsymbol{\theta}}{(2\pi)^d}
  (U_{\boldsymbol{\theta}}\otimes U_{\boldsymbol{\theta}})
  X
  (U_{\boldsymbol{\theta}}\otimes U_{\boldsymbol{\theta}})^\dagger,
  \label{eq:supp-T0}\\
  \mathcal T(X)
  &:=
  \frac{1}{2}
  \bigl(\mathcal T_0(X)+F\mathcal T_0(X)F\bigr),
  \label{eq:supp-T}
\end{align}
where $F$ is the swap operator with $F|ij\rangle=|ji\rangle$.

We now prove explicitly that
\begin{equation}
  \operatorname{range}(\mathcal T)=\mathcal A_d,
  \qquad
  \mathcal T(\rho_{\text{CSA}})=\rho_{\text{CSA}}.
  \label{eq:supp-twirl-range}
\end{equation}
Expand an arbitrary operator as
\[
  X
  =
  \sum_{i,j,k,l=0}^{d-1}
  x_{ij,kl}|ij\rangle\langle kl|.
\]
Conjugation by
$U_{\boldsymbol{\theta}}\otimes U_{\boldsymbol{\theta}}$ multiplies
$|ij\rangle\langle kl|$ by
\[
  e^{\mathrm{i}(\theta_i+\theta_j-\theta_k-\theta_l)}.
\]
The integral in \eqref{eq:supp-T0} is nonzero precisely when
\[
  \delta_{ri}+\delta_{rj}
  =
  \delta_{rk}+\delta_{rl}
  \qquad\text{for every }r,
\]
or, equivalently, when the sets $\{i,j\}$ and $\{k,l\}$ are equal.
For each $i$, it retains only the one-dimensional block on
$\operatorname{span}\{|ii\rangle\}$, while for every $i<j$ it retains the
full $2\times2$ block on
\[
  \mathcal H_{ij}
  :=
  \operatorname{span}\{|ij\rangle,|ji\rangle\}.
\]
In the ordered basis $(|ij\rangle,|ji\rangle)$ of this block, the restriction of $F$ onto it can be written as $P=\left(\begin{smallmatrix}0&1\\1&0\end{smallmatrix}\right)$.  Therefore,
on $\mathcal H_{ij}$, the second average sends
\[
  B=
  \begin{pmatrix}a&b\\c&d\end{pmatrix}
  \quad\longmapsto\quad
  \frac{1}{2}(B+PBP)
  =
  \begin{pmatrix}
    (a+d)/2&(b+c)/2\\
    (b+c)/2&(a+d)/2
  \end{pmatrix}.
\]
The resulting matrix is a linear combination of the two eigenvector
projectors of $P$, which are precisely $S_{ij}$ and $A_{ij}$.  Hence
$\operatorname{range}(\mathcal T)\subseteq\mathcal A_d$.  Conversely,
every $D_i,S_{ij},A_{ij}$ is fixed by both averages, proving the reverse
inclusion.  This proves the first identity in
\eqref{eq:supp-twirl-range}.  Since every summand of $\rho_{\text{CSA}}$ is fixed,
the second identity follows as well.

\medskip
\noindent\textit{Step 2: invariance of the two cones and symmetry
reduction.}
Set
\[
  W_{\boldsymbol{\theta}}
  :=
  U_{\boldsymbol{\theta}}\otimes U_{\boldsymbol{\theta}},
  \qquad
  V_{\boldsymbol{\theta}}
  :=
  U_{\boldsymbol{\theta}}\otimes
  \overline{U}_{\boldsymbol{\theta}}.
\]
If $X'=W_{\boldsymbol{\theta}}XW_{\boldsymbol{\theta}}^\dagger$, then
\[
  (X')^\Gamma
  =
  V_{\boldsymbol{\theta}}X^\Gamma
  V_{\boldsymbol{\theta}}^\dagger.
\]
Consequently, if $Y$ is a feasible witness for $X$ in either
$\mathrm{PPT}_2$ or $\widetilde{\mathrm{PPT}}_1$, then
$V_{\boldsymbol{\theta}}YV_{\boldsymbol{\theta}}^\dagger$ is a feasible
witness for $X'$.  Positivity and the sandwich constraints are preserved,
and
\[
  \bigl(
    V_{\boldsymbol{\theta}}YV_{\boldsymbol{\theta}}^\dagger
  \bigr)^\Gamma
  =
  W_{\boldsymbol{\theta}}Y^\Gamma
  W_{\boldsymbol{\theta}}^\dagger.
\]
Thus the trace norm, trace, and PPT conditions are
preserved.

The swap also preserves both cones.  Indeed, if $X'=FXF$, then
\[
  (X')^\Gamma
  =
  F(X^\Gamma)^{\mathsf T}F.
\]
Therefore, if $Y$ witnesses feasibility of $X$, then $FY^{\mathsf T}F$ witnesses
feasibility of $X'$. It follows by convexity that $\mathcal T$ preserves both
$\mathrm{PPT}_2$ and $\widetilde{\mathrm{PPT}}_1$.

Since $\mathcal T(\rho_{\text{CSA}})=\rho_{\text{CSA}}$, the data processing inequality for $D$ and
$D_{\mathbb M}$ implies
\[
  D_*(\rho_{\text{CSA}}\Vert\tau)
  \geq
  D_*\bigl(
    \mathcal T(\rho_{\text{CSA}})\Vert\mathcal T(\tau)
  \bigr)
  =
  D_*(\rho_{\text{CSA}}\Vert\mathcal T(\tau)),
\]
where $D_*$ denotes either divergence.  Therefore
\begin{align}
  \inf_{\tau\in\mathrm{PPT}_2}D_{\mathbb M}(\rho_{\text{CSA}}\Vert \tau)
  &=
  \inf_{\tau\in\mathrm{PPT}_2\cap\mathcal A_d}D_{\mathbb M}(\rho_{\text{CSA}}\Vert \tau),
  \label{eq:supp-reduce-ppt2}\\
  \inf_{\tau\in\widetilde{\mathrm{PPT}}_1}
  D(\rho_{\text{CSA}}\Vert\tau)
  &=
  \inf_{\tau\in\widetilde{\mathrm{PPT}}_1\cap\mathcal A_d}
  D(\rho_{\text{CSA}}\Vert\tau).
  \label{eq:supp-reduce-ppt1}
\end{align}
Moreover, every $\tau\in\mathcal A_d$ commutes with $\rho_{\text{CSA}}$.
Measurement in their common eigenbasis therefore attains the quantum
relative entropy, so
\begin{equation}
  D_{\mathbb M}(\rho_{\text{CSA}}\Vert\tau)
  =
  D(\rho_{\text{CSA}}\Vert\tau),
  \qquad
  \tau\in\mathcal A_d,\quad \tau\geq0.
  \label{eq:supp-commuting}
\end{equation}

\medskip
\noindent\textit{Step 3: equality of the two cones on
$\mathcal A_d$.}
We claim that
\begin{equation}
  \mathrm{PPT}_2\cap\mathcal A_d
  =
  \widetilde{\mathrm{PPT}}_1\cap\mathcal A_d.
  \label{eq:supp-reduced-cone-equality}
\end{equation}
The inclusion from right to left follows from
\eqref{eq:supp-cone-inclusion}.  For the converse, take
$X\in\mathrm{PPT}_2\cap\mathcal A_d$.  There is a positive $Y$ satisfying
\begin{equation}
  -Y\leq X^\Gamma\leq Y,
  \qquad
  \|Y^\Gamma\|_1\leq1.
  \label{eq:supp-initial-witness}
\end{equation}
We next symmetrize this witness and derive its structure in detail.

For $i<j$, direct partial transposition gives
\begin{align*}
  D_i^\Gamma
  &=
  D_i,\\
  S_{ij}^\Gamma
  &=
  \frac{1}{2}
  \bigl(
    |ij\rangle\langle ij|
    +
    |ji\rangle\langle ji|
    +
    |ii\rangle\langle jj|
    +
    |jj\rangle\langle ii|
  \bigr),\\
  A_{ij}^\Gamma
  &=
  \frac{1}{2}
  \bigl(
    |ij\rangle\langle ij|
    +
    |ji\rangle\langle ji|
    -
    |ii\rangle\langle jj|
    -
    |jj\rangle\langle ii|
  \bigr).
\end{align*}
These operators are invariant under conjugation by
$V_{\boldsymbol{\theta}}$ and by $F$.  Hence $X^\Gamma$ has both
invariances.  Average $Y$ first under
$Y\mapsto V_{\boldsymbol{\theta}}YV_{\boldsymbol{\theta}}^\dagger$ and
then under $Y\mapsto FYF$.  The sandwich constraint in
\eqref{eq:supp-initial-witness} is preserved because $X^\Gamma$ is fixed.
Positivity is preserved, and the trace norm cannot increase by convexity.
For the swap average, one uses
\[
  (FYF)^\Gamma
  =
  F(Y^\Gamma)^{\mathsf T}F,
\]
which has the same trace norm as $Y^\Gamma$.

To see the form of the averaged witness directly, expand the original
witness as
\[
  Y
  =
  \sum_{i,j,k,l=0}^{d-1}
  c_{ij,kl}|ij\rangle\langle kl|.
\]
For every matrix $|ij\rangle\langle kl|$,
\[
  V_{\boldsymbol{\theta}}
  |ij\rangle\langle kl|
  V_{\boldsymbol{\theta}}^\dagger
  =
  e^{\mathrm{i}(\theta_i-\theta_j-\theta_k+\theta_l)}
  |ij\rangle\langle kl|.
\]
The phase average is zero unless the coefficient of every independent
phase \(\theta_r\) vanishes, that is,
\[
  \delta_{ri}-\delta_{rj}-\delta_{rk}+\delta_{rl}=0
  \qquad\text{for every }r.
\]
If \(i=j\),
this condition requires \(k=l\), so all terms
\[
  |ii\rangle\langle kk|
\]
survive.  If \(i\neq j\), the condition requires \(k=i\) and \(l=j\),
so only the diagonal term
\[
  |ij\rangle\langle ij|
\]
survives.  Thus, after the phase average, the witness has the form
\[
  Y_0
  =
  \sum_{i,j=0}^{d-1}
  Z_{ij}|ii\rangle\langle jj|
  +
  \sum_{i\neq j}
  r_{ij}|ij\rangle\langle ij|.
\]

We next average with respect to the swap.  Since
\[
  F|ii\rangle=|ii\rangle,
  \qquad
  F|ij\rangle=|ji\rangle,
\]
we obtain
\begin{align}
  \frac12(Y_0+FY_0F)
  ={}&
  \sum_{i,j=0}^{d-1}
  Z_{ij}|ii\rangle\langle jj|
  \notag\\
  &+
  \sum_{0\leq i<j\leq d-1}
  \frac{r_{ij}+r_{ji}}{2}
  \bigl(
    |ij\rangle\langle ij|
    +
    |ji\rangle\langle ji|
  \bigr).
\end{align}
Defining
\[
  y_{ij}:=\frac{r_{ij}+r_{ji}}{2}
\]
and relabeling the averaged witness as \(Y\), this becomes
\[
  Y
  =
  \sum_{i,j=0}^{d-1}
  Z_{ij}|ii\rangle\langle jj|
  +
  \sum_{0\leq i<j\leq d-1}
  y_{ij}
  \bigl(
    |ij\rangle\langle ij|
    +
    |ji\rangle\langle ji|
  \bigr).
\]
The first term is the restriction of \(Y\) to
\(\operatorname{span}\{|ii\rangle\}_{i=0}^{d-1}\), while all remaining
terms are one-dimensional diagonal blocks.  Since the averaged witness
is positive,
\[
  Z\geq0,
  \qquad
  y_{ij}\geq0.
\]

Partial transposition maps
$|ii\rangle\langle jj|$ to $|ij\rangle\langle ji|$ and leaves
$|ij\rangle\langle ij|$ unchanged.  Since
$Z_{ji}=\overline{Z_{ij}}$, it follows that
\begin{align}
  Y^\Gamma
  ={}&
  \sum_{i=0}^{d-1}
  Z_{ii}|ii\rangle\langle ii|
  \notag\\
  &+
  \sum_{0\leq i<j\leq d-1}
  \Bigl[
    y_{ij}
    \bigl(
      |ij\rangle\langle ij|
      +
      |ji\rangle\langle ji|
    \bigr)
    +
    Z_{ij}|ij\rangle\langle ji|
    +
    \overline{Z_{ij}}|ji\rangle\langle ij|
  \Bigr].
  \label{eq:supp-Y-Gamma-expanded}
\end{align}
Thus $Y^\Gamma$ is the direct sum of the one-dimensional blocks
$[Z_{ii}]$ on $|ii\rangle$ and, for each $i<j$, the block
\begin{equation}
  B_{ij}
  =
  \begin{pmatrix}
    y_{ij}&Z_{ij}\\
    \overline{Z_{ij}}&y_{ij}
  \end{pmatrix}
  \quad
  \text{on }
  \operatorname{span}\{|ij\rangle,|ji\rangle\}.
  \label{eq:supp-Y-Gamma-block}
\end{equation}
The eigenvalues of $B_{ij}$ are
\[
  y_{ij}+|Z_{ij}|
  \quad\text{and}\quad
  y_{ij}-|Z_{ij}|.
\]
Because $y_{ij}\geq0$,
\[
  \|B_{ij}\|_1
  =
  \bigl|y_{ij}+|Z_{ij}|\bigr|
  +
  \bigl|y_{ij}-|Z_{ij}|\bigr|
  =
  2\max\{y_{ij},|Z_{ij}|\}.
\]
The trace norm is additive over orthogonal direct sums, while
$Z_{ii}\geq0$.  Consequently,
\begin{equation}
  \|Y^\Gamma\|_1
  =
  \operatorname{Tr}Z
  +
  2
  \sum_{0\leq i<j\leq d-1}
  \max\{y_{ij},|Z_{ij}|\}.
  \label{eq:supp-Y-Gamma-norm}
\end{equation}

Define
\[
  y'_{ij}
  :=
  \max\{y_{ij},|Z_{ij}|\}
\]
and
\begin{equation}
  Y'
  :=
  \sum_{i,j=0}^{d-1}
  Z_{ij}|ii\rangle\langle jj|
  +
  \sum_{0\leq i<j\leq d-1}
  y'_{ij}
  \bigl(
    |ij\rangle\langle ij|
    +
    |ji\rangle\langle ji|
  \bigr).
  \label{eq:supp-terminal-witness}
\end{equation}
Since $y'_{ij}\geq y_{ij}$, we have $Y'\geq Y\geq0$, and hence
\[
  -Y'\leq X^\Gamma\leq Y'.
\]
Every $2\times2$ block of $(Y')^\Gamma$ has eigenvalues
$y'_{ij}\pm|Z_{ij}|$, which are nonnegative, and the remaining
one-dimensional blocks are $Z_{ii}\geq0$.  Therefore
\[
  (Y')^\Gamma\geq0.
\]
Finally, by \eqref{eq:supp-Y-Gamma-norm} and the condition $\Vert Y^\Gamma \Vert_1 \leq 1$ from Eq.~\eqref{eq:supp-ppt2},
\[
  \operatorname{Tr}Y'
  =
  \operatorname{Tr}Z
  +
  2\sum_{0\leq i<j\leq d-1}y'_{ij}
  =
  \|Y^\Gamma\|_1
  \leq1.
\]
Thus $Y'$ fulfills the conditions in \eqref{eq:supp-ppt1-tilde}, proving that
$X\in\widetilde{\mathrm{PPT}}_1$.  This establishes
\eqref{eq:supp-reduced-cone-equality}.

\medskip
\noindent\textit{Step 4: conclusion.}
Combining \eqref{eq:supp-reduce-ppt2},
\eqref{eq:supp-commuting}, and
\eqref{eq:supp-reduced-cone-equality}, we obtain
\begin{align}
  \inf_{\tau\in\mathrm{PPT}_2}D_{\mathbb M}(\rho_{\text{CSA}}\Vert \tau)
  &=
  \inf_{\tau\in\mathrm{PPT}_2\cap\mathcal A_d}
  D(\rho_{\text{CSA}}\Vert\tau) \label{eq:step4_1} \\ 
  &=
  \inf_{\tau\in\widetilde{\mathrm{PPT}}_1\cap\mathcal A_d}
  D(\rho_{\text{CSA}}\Vert\tau)\\
  &=
  \inf_{\tau\in\widetilde{\mathrm{PPT}}_1}
  D(\rho_{\text{CSA}}\Vert\tau), \label{eq:step4_3}
\end{align}
where the last equality is \eqref{eq:supp-reduce-ppt1}.  The general
bounds \eqref{eq:supp-general-bounds} place
$E_R^\infty(\rho_{\text{CSA}})$ between the first and last expressions, which are
equal.  This proves \eqref{eq:supp-main-result} and therefore Lemma \ref{lemma:family} for the canonical symmetric-antisymmetric family.
\hfill$\square$

We now turn to the second family, the hyperoctahedral family from Eq.~\eqref{eq:main_HO_family}. Note that this family was already introduced in \cite{nechita2026random}, where also a corresponding twirl $\mathcal{T}_{\text{HO}}$ was defined, which maps states onto this family and $\mathcal{T}_{\text{HO}}(\rho_{\text{HO}}) = \rho_{\text{HO}}$. We can therefore use data processing as before. We denote the range of this twirl as 
\begin{align}
    \mathcal{B}_d = \text{range}(\mathcal{T}_{\text{HO}}).
\end{align}

The only thing left to show is that 
\begin{align}
    \PPT_1 \cap \ \mathcal{B}_d = \widetilde \PPT_1 \cap \mathcal{B}_d.
\end{align}

Thus, we again assume $X \in \PPT_1 \cap \ \mathcal{B}_d$. As $X \in \PPT_1$, $\Vert X^\Gamma \Vert_1 \leq 1$. To see that $X \in \widetilde{\PPT}_1$ choose the witness $Y = |X^\Gamma| \geq 0$. Clearly, $-Y \leq X^\Gamma \leq Y$ and $\Tr(Y) = \Tr(|X^\Gamma|) = \Vert X^\Gamma \Vert \leq 1$. It remains to show that $Y^\Gamma \geq 0$.
We do this, by showing that every $0\leq X\in \mathcal{B}_d$ has the property $\mathcal{N}_2(X) \ge 0$, since then $Y^\Gamma = |X^\Gamma|^\Gamma \geq 0$.

\begin{lemma}
\label{lem:HO_bineg}
Every positive semidefinite operator $X \in \mathcal{B}_d$  satisfies
\begin{equation}
 |X^\Gamma|^\Gamma\geq0.
 \label{eq:HO_bipositive}
\end{equation}
\end{lemma}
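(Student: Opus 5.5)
The plan is to reduce Lemma~\ref{lem:HO_bineg} to a finite-dimensional sign check. The key observation is that $\mathcal{B}_d$ is a commutative algebra spanned by the four mutually orthogonal projectors $\Pi_0,\dots,\Pi_3$, and that it is closed under $\Gamma$. This follows from the elementary identities
\begin{equation*}
\id^\Gamma=\id,\quad \Delta^\Gamma=\Delta,\quad F^\Gamma=d\,\ketbra{\Phi_d}{\Phi_d},\quad \ketbra{\Phi_d}{\Phi_d}^\Gamma=F/d .
\end{equation*}
In addition, $\id$, $F$, $\Delta$ and $\ketbra{\Phi_d}{\Phi_d}$ commute pairwise, since $F\ket{\Phi_d}=\Delta\ket{\Phi_d}=\ket{\Phi_d}$ and $F\Delta=\Delta F=\Delta$. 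Their joint eigenspaces are exactly the ranges of $\Pi_0,\dots,\Pi_3$. Indeed $F$ acts as $+1$ on $\Pi_0,\Pi_1,\Pi_3$ and as $-1$ on $\Pi_2$, $\Delta=\Pi_0+\Pi_3$, and $\ketbra{\Phi_d}{\Phi_d}=\Pi_0$. Hence every $X\in\mathcal{B}_d$ can be written as $X=\sum_i x_i\Pi_i$, and $X\ge0$ is equivalent to $x_i\ge0$ for all $i$.

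With this reduction, $\Gamma$ acts on $\mathcal{B}_d$ as a fixed $4\times4$ real matrix $M$ on coefficient vectors. Using the identities above, I would compute
\begin{align*}
\Pi_0^\Gamma&=\tfrac1d(\Pi_0+\Pi_1-\Pi_2+\Pi_3), &
\Pi_1^\Gamma&=\tfrac12(\id+d\Pi_0)-\Delta,\\
\Pi_2^\Gamma&=\tfrac12(\id-d\Pi_0), &
\Pi_3^\Gamma&=\Delta-F/d ,
\end{align*}
and then re-expand each right-hand side in the basis $\Pi_0,\dots,\Pi_3$. This gives $X^\Gamma=\sum_i y_i\Pi_i$ with $y=Mx$. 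Because the $\Pi_i$ are orthogonal projectors, the absolute value acts entrywise on the coefficients, so $|X^\Gamma|=\sum_i|y_i|\Pi_i\in\mathcal{B}_d$. Consequently
\begin{equation*}
|X^\Gamma|^\Gamma=\sum_i z_i\Pi_i,\qquad z=M|Mx| ,
\end{equation*}
and the lemma is equivalent to the claim that $z_i\ge0$ for all $i$ whenever $x\ge0$ entrywise.

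The remaining step is a sign case analysis. I expect that most coefficients of $y=Mx$ are manifestly nonnegative for $x\ge0$. For example, the $\Pi_1$- and $\Pi_3$-coefficients should be positive combinations of the $x_i$ with weights involving $1/d$ and $1/2$. Only one or two coefficients can become negative: presumably the $\Pi_2$ (antisymmetric) coefficient and possibly the $\Pi_0$ coefficient, reflecting the NPT directions of Werner-like states. I would split into cases according to which of these are negative. In each case $|y|=y+2y^-$, where $y^-$ is supported on the negative entries, so
\begin{equation*}
z=M^2x+2My^- = x+2My^- ,
\end{equation*}
using $M^2=\id$ because $\Gamma$ is an involution. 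Then $z\ge0$ reduces to checking that the columns of $M$ corresponding to the negative entries, scaled by $|y_i|$, are dominated by $x$ componentwise. This becomes a small set of linear inequalities in $x_0,\dots,x_3$ and $d$, to be verified under the case hypothesis.

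I expect the main obstacle to be this final verification, specifically the case where two coefficients of $y$ are simultaneously negative. There the correction $2My^-$ mixes the columns of $M$ and the inequalities depend on $d$. I would first try to prove directly that two coefficients cannot be negative at the same time for $x\ge0$. If that fails, I would substitute the case hypothesis $y_i<0$ to eliminate one $x_j$ and show that each resulting $z_i$ is a nonnegative combination of the remaining variables, valid for all $d\ge2$.
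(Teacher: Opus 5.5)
Your reduction to coefficient vectors matches the paper's, and the identity $z=M|Mx|=x+2My^-$ (using $M^2=\id$, with $y_i^-=\max\{-y_i,0\}$) is a valid alternative way to finish. The gap is that the sign check, which is the whole content of the lemma, is never carried out, and your predictions about it are wrong. As described, your case split would therefore be incomplete.

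From your own formulas for $\Pi_i^\Gamma$ (column $i$ is the expansion of $\Pi_i^\Gamma$),
\[
M=\begin{pmatrix}
\frac1d & \frac{d-1}{2} & -\frac{d-1}{2} & \frac{d-1}{d}\\
\frac1d & \frac12 & \frac12 & -\frac1d\\
-\frac1d & \frac12 & \frac12 & \frac1d\\
\frac1d & -\frac12 & \frac12 & \frac{d-1}{d}
\end{pmatrix}.
\]
Every row has a negative entry, so every coefficient of $y=Mx$ can be negative. For instance, $X=\Pi_3$ gives $y_1=-1/d$ and $X=\Pi_1$ gives $y_3=-1/2$, contrary to your expectation that the $\Pi_1$- and $\Pi_3$-coefficients are manifestly nonnegative. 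Your first attempt, ruling out two simultaneous negative coefficients, also fails for $d\ge3$: with $d=3$ and $x=(2,0,1,0)$ one gets $y_0=-1/3$ and $y_2=-1/6$.

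The repair is easier than the case analysis you anticipate. Each row of $M$ has exactly one negative entry: row $j=0,1,2,3$ has it in column $2,3,0,1$, respectively. Dropping the nonnegative terms in $z=x+2My^-$ therefore gives
\[
z_0\ge x_0-(d-1)y_2^-,\quad z_1\ge x_1-\tfrac2d\,y_3^-,\quad z_2\ge x_2-\tfrac2d\,y_0^-,\quad z_3\ge x_3-y_1^- .
\]
Reading off the rows of $M$ gives $y_2^-\le x_0/d$, $y_3^-\le x_1/2$, $y_0^-\le\frac{d-1}{2}x_2$ and $y_1^-\le x_3/d$. Hence every $z_j\ge0$, whatever sign pattern $y$ has, and with this your argument is complete.

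The paper finishes differently. It passes to the coordinates $u,q,v,p$, in which the eigenvalues of $X^\Gamma$ are $u+(d-1)p$, $v\pm q$ and $u-p$. In these coordinates $\Gamma$ just interchanges the roles of $p$ and $q$, so the problem splits into the pairs $(u,p)$ and $(v,q)$. Positivity of $X$ becomes $|p|\le v$ and $-u/(d-1)\le q\le u$. Triangle inequalities then give $U\ge u$, $|Q|\le|p|$, $V=\max\{v,|q|\}$ and $|P|=\min\{v,|q|\}$, and a split on the sign of $q$ finishes the argument. The paper's route exposes the $p\leftrightarrow q$ structure of $\Gamma$ on $\mathcal B_d$. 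Yours, once completed, is more mechanical but treats all sign patterns at once with no case distinction.
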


\begin{proof}
The following partial transposes are useful
\begin{equation}\label{eq:Gamma}
 \id^\Gamma=\id,\qquad \Delta^\Gamma=\Delta,\qquad
 F^\Gamma= d \ketbra{\Phi_d}{\Phi_d}.
\end{equation}

Recall that the projectors were defined as
\begin{align}
    &\Pi_0 = \ketbra{\Phi_d}{\Phi_d}, \\
    &\Pi_1 = \frac{\id + F}{2} - \Delta,  \\
    &\Pi_2 = \frac{\id - F}{2},\\
    & \Pi_3 = \Delta - \ketbra{\Phi_d}{\Phi_d},
\end{align}
with $\Delta = \sum_i \ketbra{ii}{ii}$.
Write \(X=\sum_kx_k \Pi_k\), with \(x_k\geq0\), and introduce
\begin{equation}
 \begin{aligned}
 u&=\frac{x_0+(d-1)x_3}{d},&
 q&=\frac{x_0-x_3}{d},\\
 v&=\frac{x_1+x_2}{2},&
 p&=\frac{x_1-x_2}{2}.
 \end{aligned}
 \label{eq:coordinates}
\end{equation}
Positivity of \(X\) gives
\begin{equation}
 u,v\geq0,\qquad -\frac{u}{d-1}\leq q\leq u,\qquad |p|\leq v.
 \label{eq:coordinate-bounds}
\end{equation}
Using Eq.~\eqref{eq:Gamma}, the eigenvalues of \(X^\Gamma\), in the same
projector order, are
\begin{equation}
 a_0=u+(d-1)p,\qquad a_1=v+q,\qquad a_2=v-q,\qquad a_3=u-p.
 \label{eq:transpose-eigenvalues}
\end{equation}
Set \(b_k=|a_k|\), so \(|X^\Gamma|=\sum_k b_kP_k\), and define (analogously to Eq.~\eqref{eq:coordinates})
\begin{equation}
 U=\frac{b_0+(d-1)b_3}{d},\quad
 Q=\frac{b_0-b_3}{d},\quad
 V=\frac{b_1+b_2}{2},\quad
 P=\frac{b_1-b_2}{2}.
 \label{eq:absolute-coordinates}
\end{equation}
A second use of Eq.~\eqref{eq:transpose-eigenvalues} shows that the
eigenvalues of \((|X^\Gamma|)^\Gamma\) are
\begin{equation}
 U+(d-1)P,\qquad V+Q,\qquad V-Q,\qquad U-P.
 \label{eq:HO_bipositive-eigenvalues}
\end{equation}
The triangle and reverse triangle inequalities imply
\begin{equation}
 U= \frac{\abs{a_0}+(d-1)\abs{a_3}}{d}\geq
 \left|\frac{a_0+(d-1)a_3}{d}\right|=u,\qquad
 |Q|\leq\frac{|a_0-a_3|}{d}=|p|\leq v.
 \label{eq:UQ-bounds}
\end{equation}
Also, from \(v\geq0\) and elementary inequalities it follows that
\begin{equation}
 V=\max\{v,|q|\},\qquad
 P=\text{sgn}(q)\min\{v,|q|\}.
 \label{eq:VP-identities}
\end{equation}
Thus \(V\geq v\geq|Q|\), proving \(V\pm Q\geq0\).
If \(q\geq0\), then \(0\leq P\leq q\leq u\leq U\).
If \(q\leq0\), then
\[
 0\leq-(d-1)P\leq-(d-1)q\leq u\leq U.
\]
In either case the first and the last eigenvalue in
\eqref{eq:HO_bipositive-eigenvalues} is nonnegative as well.
This proves \eqref{eq:HO_bipositive}.
\end{proof}

Now, since $\PPT_1 \cap \ \mathcal{B}_d = \widetilde \PPT_1 \cap \mathcal{B}_d$ and every $\sigma\in\mathcal B_d$ commutes with any hyperoctahedral state $\rho_{\text{HO}}$ and the twirl preserves the constraints of both relevant sets, the same arguments as in \eqref{eq:step4_1}-\eqref{eq:step4_3} go through and we arrive at Lemma \ref{lemma:family}.

\section{Counterexample for the $\chi$ hierarchy collapse.}
 
Here we construct, for each choice of $m \in \mathbb{N}$, $m \geq 3$, a quantum state $\rho_m$ such that $\chi_{m-2}(\rho_m) < \chi_{m-1}(\rho_m)$, with $\chi_m$ defined in Eq.~\eqref{eq:chi_hierarchy}. This shows that the conjectured equality $\chi_2(\rho) \overset{?}{=} \chi_3(\rho)$ from Ref.~\cite{LamiMeleRegula2025} is false and, more importantly, that the hierarchy does not collapse at any finite level of $m$. Finally, we show that the state $\rho_m$ is $m$-multinegative but not $(m+1)$-multinegative, therefore yielding examples of states with arbitrarily high levels of multinegativity. 

\subsection{The class of product Bell diagonal states}

Denote by $\Phi_0 = \ketbra{\phi^+}{\phi^+}, \Phi_1 = \ketbra{\phi^-}{\phi^-}, \Phi_2 = \ketbra{\psi^+}{\psi^+}, \Phi_3 = \ketbra{\psi^-}{\psi^-}$ the projectors onto the usual two-qubit Bell states. For some $n \geq 1$, we consider the class of states of $n$ qubit pairs, given by
\begin{align}\label{eq:pbds}
    \rho_v = \sum_{z\in\mathbb{Z}_4^n} v_z P_z,
\end{align}
where $v_z \geq 0$, $\sum_z v_z = 1$ and $P_z = \bigotimes_{j=1}^n \Phi_{z_j}$.
Note that
\begin{align}
    \Phi_i^\Gamma = \frac12 \sum_{j=0}^3 (-1)^{\delta_{i,3-j}} \Phi_j.
\end{align}
Thus, partial transposition yields $\rho_v^\Gamma = \rho_w$ with
\begin{align}
    & w_z = \sum_{w\in \mathbb{Z}_4^n} T_{z_1w_1} T_{z_2w_2} \ldots T_{z_nw_n} v_w \\
     \Leftrightarrow \quad & w_{\phantom{z}} = T^{\otimes n}v
\end{align}
with 
\begin{align}\label{eq:Tmap}
    T = \frac12 \begin{pmatrix} 1 & 1 & 1 & -1 \\ 1 & 1 & -1 & 1 \\ 1 & -1 & 1 & 1\\ -1 & 1 & 1 & 1 \end{pmatrix}.
\end{align}
Thus, the family of product Bell diagonal states is closed under partial transposition. Due to the orthogonality of the $P_z$, positivity of $\rho_v$ amounts to $v_z \geq 0$, whereas $\rho_v$ is PPT if $(T^{\otimes n}v)_z \geq 0$ for all $z$.

For a given two-qubit state $\tau$, consider the Pauli twirl, i.e.,
\begin{align}
    \mathcal{T}(\tau) := \frac14\sum_{\alpha=0}^3 (\sigma_\alpha \otimes \sigma_\alpha) \tau (\sigma_\alpha \otimes \sigma_\alpha),
\end{align}
where $\sigma_0 = \id, \sigma_1 = X,  \sigma_2 = Y,  \sigma_3 = Z$ denote the Pauli matrices. The Pauli twirl maps every state to a Bell diagonal one, and 
\begin{align}
    \mathcal{T}^{\otimes n}(\rho_v) = \rho_v
\end{align}
for all product Bell diagonal states. Thus, the twirl commutes with partial transposition and is trace preserving. Finally, because it maps operators to convex combinations of local unitary rotations of itself, it preserves matrix inequalities, i.e., if $\rho_1 \leq \rho_2$, then also $\mathcal{T}^{\otimes n}(\rho_1) \leq \mathcal{T}^{\otimes n}(\rho_2)$. This can be exploited to reduce the optimization for $\chi_p$ in Eq.~\eqref{eq:chi_hierarchy} for any product Bell diagonal input $\rho_v$: For any feasible tuple of matrices $(S_{-1} = \rho_v,S_0, \ldots, S_p)$, we can construct another feasible tuple $(\mathcal{T}^{\otimes n}(S_{-1}) = \rho_v, \mathcal{T}^{\otimes n}(S_0)\ldots, \mathcal{T}^{\otimes n}(S_p))$. Thus, it suffices to optimize over product Bell diagonal matrices $S_i = \sum_{z\in \mathbb{Z}_4^n} s_{i,z} P_z$.

The first sandwich constraint $-S_0 \leq \rho_v^\Gamma \leq S_0$ and the remaining ones, $-S_i \leq S_{i-1}^\Gamma \leq S_i$ then become 
\begin{align}
    -s_{0,z} &\leq (T^{\otimes n}v)_z \phantom{_{i-1}}\leq s_{0,z}\qquad \forall z\in \mathbb{Z}_4^n, \label{eq:pbs_chi1}\\
-s_{i,z} &\leq (T^{\otimes n}s_{i-1})_z \leq s_{i,z}\qquad \forall z\in \mathbb{Z}_4^n, i=1,\ldots,p.\label{eq:pbs_chi2}
\end{align}

To save space, let us write for any coefficient vectors $x$ and $y$ that  $x \geq y$ if $x_z \geq y_z$ for all $z$, and $|x|$ denotes the vector with entries $|x|_z = |x_z|$ for all $z$, i.e., inequalities and absolute values are to be understood component wise. We can then write above Equations~\eqref{eq:pbs_chi1} and \eqref{eq:pbs_chi2} as 
\begin{align}
    s_0 \geq |T^{\otimes n} v|\quad\text{and} \quad s_i \geq |T^{\otimes n}s_{i-1}| \text{ for }i=1,\ldots,p.
\end{align}
Furthermore, define the all-ones vector $\boldsymbol{1}^{(n)}$ (the superscript making explicit the number of Bell pairs we consider) with entries
\begin{align}
    (\boldsymbol{1}^{(n)})_z = 1\quad \forall z\in \mathbb{Z}_4^n.
\end{align}
With this, we can write $\sum_z v_z = \boldsymbol{1}^{(n)} \cdot v$.

With these abbreviations, we can write the optimization for $\chi$ as the linear program
\begin{align}\label{eq:chi_pbd}
    \chi_p(\rho_v) = \min\left\{ \boldsymbol{1}^{(n)} \cdot s_p \,:\, s_i \geq |T^{\otimes n} s_{i-1}| \text{ for } i = 0,\ldots,p \text{ and } s_{-1} = v\right\}.
\end{align}

Finally, we define on the level of the product Bell diagonal coefficients the map
\begin{align}
    F^{(n)}(x):=|T^{\otimes n}x|,
\end{align}
and its $k$-fold application as
\begin{align}
    F^{(n)}_k(x):=\underbrace{F^{(n)} \circ F^{(n)} \circ \ldots \circ  F^{(n)}}_{k \; \text{times}}(x).
\end{align}

Thus, $F^{(n)}$ is reminiscent to the absolute partial transposition map $\mathcal{N}$ in Eq.~\eqref{eq:Nmap}, but $F^{(n)}$ applies first the partial transposition, followed by taking the absolute value. Thus,
\begin{align}\label{eq:NkFk}
    \mathcal{N}_k(\rho_v) = \rho_{T^{\otimes n}F^{(n)}_{k-1}(v)}.
\end{align}

\subsection{Construction of a $\chi$ separating state.}

Fix $m \geq 1$.
Start with a single qubit pair and consider the non-normalized, positive semidefinite operator $\rho_{v_1}$ with 
\begin{align}
    v_1 = \begin{pmatrix}2 \\ 0 \\ 0 \\ 0\end{pmatrix},
\end{align}
i.e., $\rho_{v_1} = 2\ketbra{\phi^+}{\phi^+}$. From this, we construct an $m$ qubit pair, product Bell diagonal state by setting recursively
\begin{align}
    \rho_{v_k} := (|\rho_{v_{k-1}}^\Gamma|+\rho_{v_{k-1}}^\Gamma) \otimes \Phi_0+ (|\rho_{v_{k-1}}^\Gamma|-\rho_{v_{k-1}}^\Gamma)\otimes \Phi_1,
\end{align}
with, again, $\Phi_0 = \ketbra{\phi^+}{\phi^+}$, $\Phi_1 = \ketbra{\phi^-}{\phi^-}$.
As in each step we only add a Bell pair, the result is again product Bell diagonal. On the level of the coefficients, we can equivalently write
\begin{align}\label{eq:vm}
    v_k := |\tilde{v}_{k-1}| \otimes \begin{pmatrix}1 \\ 1 \\ 0 \\ 0\end{pmatrix} + \tilde{v}_{k-1} \otimes \begin{pmatrix}1 \\ -1 \\ 0 \\ 0\end{pmatrix},
\end{align}
where $\tilde{v}_k := T^{\otimes k}v_k$. 
Note that, due to $v_1 \geq 0$ and $|x| - x \geq 0$, the $4^m$ entries of $v_m$ are non-negative and therefore $\rho_{v_m} \geq 0$ (but not normalized).

Let us establish an important property of $\rho_{v_m}$ on which the subsequent results rely:
\begin{lemma}\label{lem:Fm}
For every $m\geq 1$, the vector $v_m$ reaches the flat distribution $\boldsymbol{1}^{(m)}$ after exactly $m$ applications of $F^{(m)}$:
\begin{align}
  F^{(m)}_k(v_m)&=\boldsymbol{1}^{(m)}\quad(k\geq m),  \\
  F^{(m)}_k(v_m)&\neq\boldsymbol{1}^{(m)}
  \quad(0\leq k<m).
\end{align}
\end{lemma}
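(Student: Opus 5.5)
The proof is by induction on $m$. The engine is a one-step recursion for $F^{(m)}$ applied to $v_m$, which follows from three small facts about the $4\times4$ matrix $T$ in Eq.~\eqref{eq:Tmap}.

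\textbf{The three facts about $T$.} Write $e_{01}:=(1,1,0,0)^{\mathsf T}$ and $e_{23}:=(0,0,1,1)^{\mathsf T}$. A direct computation gives
\[
T e_{01}=e_{01},\qquad T(1,-1,0,0)^{\mathsf T}=(0,0,1,-1)^{\mathsf T},\qquad T e_{23}=e_{23},\qquad T\boldsymbol{1}^{(1)}=\boldsymbol{1}^{(1)},\qquad T^2=\id .
\]
The last identity holds because the rows of $T$ are orthonormal and $T$ is symmetric. Consequently $T^{\otimes n}$ is an involution, and $\boldsymbol{1}^{(n)}=(\boldsymbol{1}^{(1)})^{\otimes n}$ is a fixed point of $F^{(n)}$.

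\textbf{The splitting rule.} For vectors $x,y$ on $n$ pairs I will use
\[
F^{(n+1)}(x\otimes e_{01}+y\otimes e_{23})=F^{(n)}(x)\otimes e_{01}+F^{(n)}(y)\otimes e_{23}.
\]
This holds because $T^{\otimes(n+1)}$ maps the two summands to $T^{\otimes n}x\otimes e_{01}$ and $T^{\otimes n}y\otimes e_{23}$. These have disjoint supports in the last pair, so the componentwise absolute value splits.

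\textbf{The first step for $v_m$.} Apply $T^{\otimes m}$ to Eq.~\eqref{eq:vm} and use $T^{\otimes(m-1)}\tilde v_{m-1}=v_{m-1}$ (involution). This gives
\[
T^{\otimes m}v_m=T^{\otimes(m-1)}|\tilde v_{m-1}|\otimes e_{01}+v_{m-1}\otimes(0,0,1,-1)^{\mathsf T}.
\]
Taking absolute values and using $v_{m-1}\ge0$, together with $|\tilde v_{m-1}|=F^{(m-1)}(v_{m-1})$, yields
\[
F^{(m)}(v_m)=F^{(m-1)}_2(v_{m-1})\otimes e_{01}+v_{m-1}\otimes e_{23}.
\]
Iterating with the splitting rule then gives, for all $k\ge1$,
\[
F^{(m)}_k(v_m)=F^{(m-1)}_{k+1}(v_{m-1})\otimes e_{01}+F^{(m-1)}_{k-1}(v_{m-1})\otimes e_{23}.
\]
Since $e_{01}+e_{23}=\boldsymbol{1}^{(1)}$, the right-hand side equals $\boldsymbol{1}^{(m)}$ if and only if both $F^{(m-1)}_{k+1}(v_{m-1})=\boldsymbol{1}^{(m-1)}$ and $F^{(m-1)}_{k-1}(v_{m-1})=\boldsymbol{1}^{(m-1)}$.

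\textbf{The induction.} Base case $m=1$: $v_1=(2,0,0,0)^{\mathsf T}\neq\boldsymbol{1}^{(1)}$, while $F^{(1)}(v_1)=|Tv_1|=\boldsymbol{1}^{(1)}$, and flatness persists under further applications since $\boldsymbol{1}^{(1)}$ is fixed. Induction step: assume $F^{(m-1)}_j(v_{m-1})=\boldsymbol{1}^{(m-1)}$ exactly when $j\ge m-1$. Then for $k\ge1$ the two conditions above hold simultaneously if and only if $k-1\ge m-1$, that is, $k\ge m$. The case $k=0$ is separate: $v_m\neq\boldsymbol{1}^{(m)}$ because, by Eq.~\eqref{eq:vm}, every entry whose last index lies in $\{2,3\}$ vanishes.

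\textbf{Main obstacle.} I expect the only delicate point to be the first step for $v_m$. It requires recognising that the $(1,-1,0,0)$-component is mapped by $T$ onto the $\{2,3\}$-block, that the involution $T^{\otimes(m-1)}\tilde v_{m-1}=v_{m-1}$ undoes the tilde, and that positivity of $v_{m-1}$ removes the absolute value there. Once this recursion is established, the remainder is bookkeeping on the exponents $k\pm1$.
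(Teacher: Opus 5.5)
Your proof is correct and follows the paper's argument essentially step for step: the same induction on $m$, the same identities for $T$, the same first step using the involution $T^2=\id$ and $v_{m-1}\geq 0$, and the same recursion $F^{(m)}_k(v_m)=F^{(m-1)}_{k+1}(v_{m-1})\otimes e_{01}+F^{(m-1)}_{k-1}(v_{m-1})\otimes e_{23}$ for $k\geq1$. Your explicit splitting rule, the ``if and only if'' bookkeeping, and the separate treatment of $k=0$ spell out what the paper leaves implicit; the paper checks only $k=m$ and $k=m-1$ and relies tacitly on $\boldsymbol{1}^{(m)}$ being a fixed point of $F^{(m)}$ for the other values of $k$.
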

\begin{proof}
    For $m=1$, the claim is obviously true, since 
    \begin{align}
        F^{(1)}_1(v_1) = \left|T\begin{pmatrix}2 \\ 0 \\ 0 \\ 0\end{pmatrix}\right| = \left|\begin{pmatrix}1 \\ 1 \\ 1 \\ -1\end{pmatrix}\right| = \boldsymbol{1}^{(1)}
    \end{align}
    (and likewise, since $T^{\otimes m} \boldsymbol{1}^{(m)} = \boldsymbol{1}^{(m)}$, the same is true for more applications of $F^{(1)}$),
    whereas
    \begin{align}
    F^{(1)}_0(v_1) = v_1 =\begin{pmatrix}2 \\ 0 \\ 0 \\ 0\end{pmatrix}\neq \boldsymbol{1}^{(1)}.
    \end{align}
    We prove the case of $m>1$ by induction. To that end, note that
    \begin{align}
        T\begin{pmatrix}1 \\ 1 \\ 0 \\ 0\end{pmatrix} = \begin{pmatrix}1 \\ 1 \\ 0 \\ 0\end{pmatrix},\qquad T\begin{pmatrix}1 \\ -1 \\ 0 \\ 0\end{pmatrix} = \begin{pmatrix}0 \\ 0 \\ 1 \\ -1\end{pmatrix},\\
        T\begin{pmatrix}0 \\ 0 \\ 1 \\ 1\end{pmatrix} = \begin{pmatrix}0 \\ 0 \\ 1 \\ 1\end{pmatrix},\qquad T\begin{pmatrix}0 \\ 0 \\ 1 \\ -1\end{pmatrix} = \begin{pmatrix}1 \\ -1 \\ 0 \\ 0\end{pmatrix}.
    \end{align}
Thus, 
\begin{align}
    T^{\otimes m} v_m&=T^{\otimes (m-1)}|\tilde{v}_{m-1}| \otimes T\begin{pmatrix}1 \\ 1 \\ 0 \\ 0\end{pmatrix} + T^{\otimes(m-1)}\tilde{v}_{m-1} \otimes T\begin{pmatrix}1 \\ -1 \\ 0 \\ 0\end{pmatrix} \\
    &= T^{\otimes (m-1)}|\tilde{v}_{m-1}| \otimes \begin{pmatrix}1 \\ 1 \\ 0 \\ 0\end{pmatrix} + v_{m-1} \otimes \begin{pmatrix}0 \\ 0 \\ 1 \\ -1\end{pmatrix}.
\end{align}
The two terms on the right-hand side are mutually orthogonal. Thus, taking component wise absolute value, yields
\begin{align}
    F^{(m)}(v_m)=|T^{\otimes m} v_m|&= |T^{\otimes (m-1)}|\tilde{v}_{m-1}|| \otimes \begin{pmatrix}1 \\ 1 \\ 0 \\ 0\end{pmatrix} + |v_{m-1}| \otimes \begin{pmatrix}0 \\ 0 \\ 1 \\ 1\end{pmatrix} \\
    &= F^{(m-1)}_2(v_{m-1})\otimes \begin{pmatrix}1 \\ 1 \\ 0 \\ 0\end{pmatrix} + v_{m-1}\otimes \begin{pmatrix}0 \\ 0 \\ 1 \\ 1\end{pmatrix}.
\end{align}
Here, we have used $T^2 = \id$ and that the entries of $v_{m-1}$ are non-negative. Applying the map $F^{(m)}$ again to both sides of the equation and making use of $T(0,0,1,1)^{\text{T}} = (0,0,1,1)^{\text{T}}$ shows that we only keep applying $F^{(m-1)}$, which yields for any $k \geq 1$,
\begin{align}\label{eq:Fk_recursive}
    F^{(m)}_k(v_m)
    &= F^{(m-1)}_{k+1}(v_{m-1})\otimes \begin{pmatrix}1 \\ 1 \\ 0 \\ 0\end{pmatrix} + F^{(m-1)}_{k-1}(v_{m-1})\otimes \begin{pmatrix}0 \\ 0 \\ 1 \\ 1\end{pmatrix}.
\end{align}
First, set $k = m$. By assumption of induction, $F^{(m-1)}_{m+1}(v_{m-1}) = \boldsymbol{1}^{(m-1)}$ and $F^{(m-1)}_{m-1}(v_{m-1}) = \boldsymbol{1}^{(m-1)}$. Thus, Eq.~\eqref{eq:Fk_recursive} yields
\begin{align}
    F^{(m)}_m(v_m) &= \boldsymbol{1}^{(m-1)} \otimes \begin{pmatrix}1 \\ 1 \\ 0 \\ 0\end{pmatrix} + \boldsymbol{1}^{(m-1)} \otimes \begin{pmatrix}0 \\ 0 \\ 1 \\ 1\end{pmatrix} = \boldsymbol{1}^{(m)}.
\end{align}
On the other hand, for $k = m-1$, $F^{(m-1)}_{m-2} \neq \boldsymbol{1}^{(m-1)}$, thus
\begin{align}
    F^{(m)}_{m-1}(v_m) &= \boldsymbol{1}^{(m-1)} \otimes \begin{pmatrix}1 \\ 1 \\ 0 \\ 0\end{pmatrix} + F^{(m-1)}_{m-2} \otimes \begin{pmatrix}0 \\ 0 \\ 1 \\ 1\end{pmatrix} \neq \boldsymbol{1}^{(m)}.
\end{align}
\end{proof}

Let us continue by normalizing the constructed operators $\rho_{v_m}$. We define
\begin{align}\label{eq:rhom}
    \rho_m := \frac{\rho_{v_m}}{Z_m}
\end{align}
with \begin{align}
    Z_m = \Tr(\rho_{v_m}) = \sum_z v_{m,z} = \boldsymbol{1}^{(m)} \cdot v_m
\end{align}
and $v_m$ constructed from Eq.~\eqref{eq:vm}.
Since $v_m$ is non-zero and non-negative, $Z_m>0$ and $\rho_m$ is a quantum state of $m$ qubit pairs.

Now we are in position to calculate the value of $\chi_{m-1}(\rho_m)$:

\begin{lemma}
For every $m\geq 1$,
\begin{equation}\label{eq:chi_mm1}
  \chi_{m-1}(\rho_m)=\frac{4^m}{Z_m}.
\end{equation}
\end{lemma}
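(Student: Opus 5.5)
The plan is to remove the normalization first and then prove matching upper and lower bounds on the linear program \eqref{eq:chi_pbd}. The feasible set of \eqref{eq:chi_pbd} is positively homogeneous: if $(s_0,\ldots,s_p)$ is feasible for $v$, then $(s_0/Z,\ldots,s_p/Z)$ is feasible for $v/Z$, and conversely. Hence $\chi_{m-1}(\rho_m)=\chi_{m-1}(\rho_{v_m})/Z_m$, where $\chi_{m-1}(\rho_{v_m})$ denotes the same linear program evaluated at the unnormalized vector $v_m$. It therefore suffices to show that this value equals $4^m$.

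\emph{Upper bound.} I would take the greedy feasible point $s_i^\ast:=F^{(m)}_{i+1}(v_m)$ for $i=0,\ldots,m-1$. It satisfies every constraint $s_i\geq|T^{\otimes m}s_{i-1}|$ with equality. By Lemma~\ref{lem:Fm}, $s^\ast_{m-1}=F^{(m)}_m(v_m)=\boldsymbol 1^{(m)}$. The objective is then $\boldsymbol 1^{(m)}\cdot\boldsymbol 1^{(m)}=4^m$.

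\emph{Lower bound via a dual certificate.} Let $(s_0,\ldots,s_{m-1})$ be any feasible tuple and write $s_{-1}:=v_m$. Every $s_i$ with $i\ge0$ is entrywise nonnegative. The basic inequality is: for $a\geq0$ and $|y|\leq a$ componentwise,
\[
a\cdot s_i\;\geq\; a\cdot|T^{\otimes m}s_{i-1}|\;\geq\; y\cdot T^{\otimes m}s_{i-1}=(T^{\otimes m}y)\cdot s_{i-1},
\]
using that $T$ is symmetric. I would build a chain of such certificates. Start with $a_{m-1}=\boldsymbol 1^{(m)}$. At each step set
\[
y_i:=a_i\odot\operatorname{sgn}\bigl(T^{\otimes m}s^\ast_{i-1}\bigr),\qquad a_{i-1}:=T^{\otimes m}y_i,
\]
which is the choice dictated by complementary slackness with the greedy optimum. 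Provided every $a_i\geq0$ for $i=0,\ldots,m-2$, chaining the basic inequality gives $\boldsymbol 1\cdot s_{m-1}\geq (T^{\otimes m}y_0)\cdot v_m$. The greedy point attains equality at every step, so this bound equals $\boldsymbol 1\cdot s^\ast_{m-1}=4^m$. That proves $\chi_{m-1}(\rho_{v_m})\ge 4^m$.

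\emph{Main obstacle: nonnegativity of the $a_i$.} The step I expect to be hardest is showing $a_i\geq0$ for $i=0,\ldots,m-2$. Where $T^{\otimes m}s^\ast_{i-1}$ has zero entries, the sign vector is not determined, and those entries of $y_i$ may be chosen freely in $[-a_i,a_i]$ to help. I would argue by induction on $m$, following the proof of Lemma~\ref{lem:Fm}. By \eqref{eq:Fk_recursive}, the greedy iterates split along the last qubit pair into a $(1,1,0,0)$ block and a $(0,0,1,1)$ block. Both blocks are $T$-invariant; the first carries the $(m-1)$-pair iterates shifted forward by one step, and the second carries them shifted back by one. Accordingly I would choose the dual vectors in the same block form, $y_i=y'_{i+1}\otimes(1,1,0,0)^{\mathsf T}+y'_{i-1}\otimes(0,0,1,1)^{\mathsf T}$, where the $y'_j$ are the certificates for $v_{m-1}$. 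Then nonnegativity of the $a_i$ reduces to the induction hypothesis. The one place needing care is the very first step, which mixes the two blocks through $T(1,-1,0,0)^{\mathsf T}=(0,0,1,-1)^{\mathsf T}$. There I would check directly that the combined certificate still pairs correctly with $v_m=|\tilde v_{m-1}|\otimes(1,1,0,0)^{\mathsf T}+\tilde v_{m-1}\otimes(1,-1,0,0)^{\mathsf T}$. The base case $m=1$ is immediate: $a_0=\boldsymbol 1$ and $y_0=\operatorname{sgn}((1,1,1,-1)^{\mathsf T})$ give $(Ty_0)\cdot v_1=(2,0,0,0)^{\mathsf T}\cdot(2,0,0,0)^{\mathsf T}=4$.
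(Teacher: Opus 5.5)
Your certificate chain is the paper's argument. The paper's sign matrices $D_i$ and its inequality $x_{i+1}\cdot s_i\ge x_i\cdot s_{i-1}$ are exactly your $y_i$ and your basic inequality. As written, however, the proposal leaves open the step you call the main obstacle. The nonnegativity of the $a_i$ is only sketched through an induction on $m$ that you do not carry out. That induction would also need certificates at indices outside the range supplied by the induction hypothesis, such as $y'_{m}$ for $i=m-1$ and $y'_{-1}$ for $i=0$.

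The missing observation makes this step immediate. Put $x_k:=F^{(m)}_k(v_m)$, so that $s^\ast_{k-1}=x_k$. Your recursion started at $a_{m-1}=\boldsymbol{1}^{(m)}=x_m$ simply reproduces the greedy iterates. Indeed, if $a_i=x_{i+1}=|T^{\otimes m}x_i|$, then $y_i=|T^{\otimes m}x_i|\odot\operatorname{sgn}(T^{\otimes m}x_i)=T^{\otimes m}x_i$. The sign ambiguity at zero entries is irrelevant, because $a_i$ vanishes there, so there is in fact no freedom to exploit. Since $T$ is symmetric with $T^2=\id$, it follows that $a_{i-1}=T^{\otimes m}T^{\otimes m}x_i=x_i$. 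Hence $a_i=x_{i+1}\ge0$ for all $i=0,\dots,m-1$, being entrywise absolute values. No block decomposition or induction is needed, and your worry about the mixed first step disappears because $a_{-1}=x_0=v_m$.

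With this fix your proof is complete. The only real difference from the paper is how the final constant is evaluated. The chain ends at $a_{-1}\cdot v_m=\lVert v_m\rVert_2^2$. The paper computes this via $2$-norm invariance of $F^{(m)}$, which holds because $T^{\otimes m}$ is orthogonal, giving $\lVert v_m\rVert_2^2=\lVert\boldsymbol{1}^{(m)}\rVert_2^2=4^m$. Your complementary-slackness argument also works: the greedy point attains equality in every link, so the feasibility-independent lower bound equals the greedy objective $4^m$, and no norm computation is needed. Your homogeneity reduction to the unnormalized $v_m$ is equivalent to the paper working directly with $v_m/Z_m$.
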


\begin{proof}
Put
\begin{align}
  x_k:=F^{(m)}_k(v_m),
  \qquad (k=0,\ldots,m).
\end{align}
Lemma~\ref{lem:Fm} gives $x_m=\boldsymbol{1}^{(m)}$. Set
\begin{align} \label{eq:si_feasib}
  s_i&:=\frac{x_{i+1}}{Z_m}
  \qquad (i=0,\ldots,m-1),\\
  s_{-1} &:= \frac{v_m}{Z_m}.
\end{align}
Thus, $s_i = x_{i+1}/Z_m = F^{(m)}_{i+1}(v_m)/Z_m = |T^{\otimes m} F^{(m)}_{i}(v_m)|/Z_m = |T^{\otimes m} x_i|/Z_m = |T^{\otimes m}s_{i-1}|$. This means that the $s_i$ constitute a feasible point in the optimization of $\chi_{m-1}$ in Eq.~\eqref{eq:chi_pbd}. Thus,
\begin{align}
    \chi_{m-1}(\rho_m) \leq \boldsymbol{1}^{(m)} \cdot s_{m-1} = \boldsymbol{1}^{(m)} \cdot \frac{x_m}{Z_m} = \frac{\boldsymbol{1}^{(m)} \cdot \boldsymbol{1}^{(m)}}{Z_m} = \frac{4^m}{Z_m}.
\end{align}

It remains to show that no other feasible point can reach a lower objective value.  For $i=0,\ldots,m-1$, let
\begin{equation}
  D_i:=\operatorname{diag}(d_{i,z})_z,
\end{equation}
where $d_{i,z}=1$ if $(T^{\otimes m}x_i)_z\geq 0$ and $d_{i,z}=-1$ otherwise. Thus $D_i^2=I$ and
\begin{equation}
  x_{i+1}=|T^{\otimes m}x_i|=D_i T^{\otimes m} x_i,
\end{equation}
and, since $D_i^2 = \id$, 
\begin{equation}\label{eq:xixip1}
     T^{\otimes m} D_i x_{i+1}=x_i.
\end{equation}
Consider any feasible point $s_{-1},s_0,\ldots,s_{m-1}$ with $s_{-1}=v_m/Z_m$. Componentwise,
\begin{equation}\label{eq:sisim1}
  s_i\geq| T^{\otimes m}s_{i-1}|\geq D_iT^{\otimes m}s_{i-1},
\end{equation}
where the second inequality stems from the fact that the sign matrix $D_i$ might make some entries of $T^{\otimes m}s_{i-1}$ negative.

Since $x_{i+1}\geq 0$, Eqs.~\eqref{eq:xixip1} and \eqref{eq:sisim1} imply
\begin{align}
  x_{i+1}\cdot s_i = x_{i+1}^{\text{T}} s_i
  &\geq x_{i+1}^{\text{T}} D_iT^{\otimes m}s_{i-1},
  \\
  &=(T^{\otimes m}D_ix_{i+1})^{\text{T}} s_{i-1}=x_i^{\text{T}} s_{i-1} = x_i\cdot s_{i-1}.
\end{align}
Starting with $i=m-1$ gives
\begin{equation}\label{eq:chimm1_lb}
  \boldsymbol{1}^{(m)} \cdot s_{m-1}
  =x_m \cdot s_{m-1}
  \geq \ldots \geq x_0 \cdot s_{-1}
  =v_m\cdot \frac{v_m}{Z_m} = \frac{\lVert v_m\rVert_2^2}{Z_m}.
\end{equation}
The $2$-norm of any $x$ is invariant under the application of $F^{^{(m)}}$, since 
\begin{align}\label{eq:Fknorm}
    \lVert F^{(m)}(x) \rVert_2^2 = \sum_z |T^{\otimes m} x|_z^2,
\end{align} and partial transposition keeps the $2$-norm invariant. Thus,
\begin{equation}
  \lVert v_m\rVert_2^2=\lVert F^{(m)}_m(v_m)\rVert_2^2
  =\lVert x_m\rVert_2^2
  =\lVert\boldsymbol{1}^{(m)}\rVert_2^2
  =4^m.
\end{equation}
Thus, choosing the optimal $s_i$, we get from Eq.~\eqref{eq:chimm1_lb}
\begin{align}
    \chi_{m-1}(\rho_m) = \boldsymbol{1}^{(m)} \cdot s_{m-1} \geq \frac{\lVert v_m\rVert_2^2}{Z_m} = \frac{4^m}{Z_m},
\end{align}
matching the upper bound from before and proving the claim.
\end{proof}

Finally, we show that level $\chi_{m-2}$ yields a strictly smaller value.

\begin{theorem}[$\chi$-gap at every finite level]
For every integer $m\geq 3$, the state $\rho_m$ from Eq.~\eqref{eq:rhom} fulfills
\begin{equation}
  \chi_{m-2}(\rho_m)<\chi_{m-1}(\rho_m).
\end{equation}
\end{theorem}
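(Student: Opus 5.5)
The approach is to reuse the explicit feasible point from the proof of Eq.~\eqref{eq:chi_mm1} one level earlier, and to compare its objective value with $4^m/Z_m$ via Cauchy--Schwarz. The matching value $\chi_{m-1}(\rho_m)=4^m/Z_m$ is already established, so only an upper bound on $\chi_{m-2}(\rho_m)$ is required.

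Set $x_k:=F^{(m)}_k(v_m)$ as before. Define $s_{-1}=v_m/Z_m$ and $s_i=x_{i+1}/Z_m$ for $i=0,\ldots,m-2$. Exactly as in the proof of Eq.~\eqref{eq:chi_mm1}, this tuple satisfies $s_i=|T^{\otimes m}s_{i-1}|$, so it is feasible for the linear program \eqref{eq:chi_pbd} at level $p=m-2$. Hence
\begin{align}
  \chi_{m-2}(\rho_m)\leq \boldsymbol{1}^{(m)}\cdot s_{m-2}=\frac{\boldsymbol{1}^{(m)}\cdot x_{m-1}}{Z_m}.
\end{align}
It then suffices to show $\boldsymbol{1}^{(m)}\cdot x_{m-1}<4^m$.

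Two observations give this. First, by the $2$-norm invariance of $F^{(m)}$ in Eq.~\eqref{eq:Fknorm}, $\lVert x_{m-1}\rVert_2^2=\lVert x_m\rVert_2^2=\lVert\boldsymbol{1}^{(m)}\rVert_2^2=4^m$. This holds because $T^{\otimes m}$ is orthogonal: $T$ is symmetric with $T^2=\id$. Second, Cauchy--Schwarz gives
\begin{align}
  \boldsymbol{1}^{(m)}\cdot x_{m-1}\leq \lVert\boldsymbol{1}^{(m)}\rVert_2\,\lVert x_{m-1}\rVert_2 = 2^m\cdot 2^m=4^m .
\end{align}
Equality would force $x_{m-1}=c\,\boldsymbol{1}^{(m)}$ with $c\geq0$, and the norm condition then gives $c=1$. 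But Lemma~\ref{lem:Fm} states $F^{(m)}_{m-1}(v_m)\neq\boldsymbol{1}^{(m)}$, so the inequality is strict. Combined with Eq.~\eqref{eq:chi_mm1}, this yields $\chi_{m-2}(\rho_m)<4^m/Z_m=\chi_{m-1}(\rho_m)$.

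The main obstacle is already resolved by Lemma~\ref{lem:Fm}: the substantive input is that the flat vector is reached at step $m$ and not before. Only routine points remain. One checks that the index range is valid, i.e.\ $m-2\geq 0$, which holds since $m\geq3$. One also confirms that the equality case of Cauchy--Schwarz is correctly excluded; here $x_{m-1}\geq0$ is not even needed, because any vector parallel to $\boldsymbol{1}^{(m)}$ with the same norm and a positive inner product must equal $\boldsymbol{1}^{(m)}$. If a quantitative gap such as the stated $\binom{m}{\lfloor m/2\rfloor}^{-1}$-type bound is desired, I would compute $\boldsymbol{1}^{(m)}\cdot x_{m-1}$ using the recursion \eqref{eq:Fk_recursive}. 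That recursion expresses this sum as $\boldsymbol{1}^{(m-1)}\cdot(2x^{(m-1)}_m+2x^{(m-1)}_{m-2})$, which allows an induction on $m$. Strictness alone, however, needs only the argument above.
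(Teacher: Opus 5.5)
Your proof is correct and follows the paper's argument essentially step for step. It truncates the optimal feasible point for $\chi_{m-1}$ to level $m-2$, uses the $2$-norm invariance of $F^{(m)}$ to get $\lVert x_{m-1}\rVert_2=2^m$, and applies Cauchy--Schwarz with strictness coming from Lemma~\ref{lem:Fm}. Your treatment of the equality case is slightly more careful than the paper's wording: you use the norm to force $c=1$ rather than stopping at mere proportionality.
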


\begin{proof}
Fix $m\geq 3$ and consider again the feasible point $s_{-1},s_0,\ldots,s_{m-1}$ in Eq.~\eqref{eq:si_feasib} for the optimization of $\chi_{m-1}$. Discarding $s_{m-1}$ yields a feasible point for $\chi_{m-2}$, showing 
\begin{equation}\label{eq:chi_mm2_bound}
  \chi_{m-2}(\rho_m)
  \leq \boldsymbol{1}^{(m)}\cdot s_{m-2} =\frac{\boldsymbol{1}^{(m)} \cdot x_{m-1}}{Z_m}.
\end{equation}
By Lemma~\ref{lem:Fm}, $x_{m-1}\neq\boldsymbol{1}_m$. On the other hand, as shown in Eq.~\eqref{eq:Fknorm}, $F^{(m)}$ preserves the $2$-norm and $F^{(m)}(x_{m-1})=\boldsymbol{1}^{(m)}$, so
\begin{align}
  \lVert x_{m-1}\rVert_2
  =\lVert\boldsymbol{1}^{(m)}\rVert_2
  =\sqrt{4^m}.
\end{align}
Because $x_{m-1}\geq 0$, the Cauchy-Schwarz inequality gives
\begin{align}
  \boldsymbol{1}^{(m)} \cdot x_{m-1}
  \leq\lVert\boldsymbol{1}^{(m)}\rVert_2\lVert x_{m-1}\rVert_2
  =4^m.
\end{align}
Equality would force $x_{m-1}$ to be proportional to $\boldsymbol{1}_m$, which is a contradiction. Therefore the inequality is strict. Combining this with Eqs.~\eqref{eq:chi_mm1} and \eqref{eq:chi_mm2_bound} yields
\begin{align}
  \chi_{m-2}(\rho_m)
  <\frac{4^m}{Z_m}
  =\chi_{m-1}(\rho_m).
\end{align}
\end{proof}

Some comments are in order. 

First, at level $m-1$, the hierarchy collapses for $\rho_m$. This can be seen from the fact that the optimal feasible point at level $\chi_{m-1}$ fulfills $s_{m-1} = x_m/Z_m = \boldsymbol{1}^{(m)}/Z_m$. Therefore, in the optimization of $\chi_m$, one can simply choose $s_m = s_{m-1}$ (and the other $s_i$ as before) as a feasible point, showing $\chi_m(\rho_m) \leq 4^m/Z_m$, implying collapse. 

Second, the normalization constant $Z_m$ can be calculated explicitly to read $Z_m = 2^m \binom{m}{\lfloor m/2\rfloor}\sim 4^m /\sqrt{\pi m/2}$. Thus,
\begin{align}
    \chi_{m-1}(\rho_m) = \chi_m(\rho_m) = 2^{E_{c,\text{PPT}}^{\text{exact}}(\rho_m)} = \frac{2^m}{\binom{m}{\lfloor m/2 \rfloor}} \sim \sqrt{\pi m/2}.
\end{align}
Furthermore, the optimal point for the optimization of $\chi_{m-1}(\rho_m)$ can be truncated to a feasible point of $\chi_{m-2}(\rho_m)$. Evaluating its value yields
\begin{align}
    \chi_{m-2}(\rho_m) \leq \frac{2^m-1}{\binom{m}{\lfloor m/2 \rfloor}},
\end{align}
yielding a gap of at least
\begin{align}
    \chi_{m-1}(\rho_m) - \chi_{m-2}(\rho_m) \geq \frac{1}{\binom{m}{\lfloor m/2 \rfloor}}.
\end{align}

Third, the state $\rho_m$ is $m$-multinegative, but not $(m+1)$-multinegative. This can be seen by recalling from Eq.~\eqref{eq:NkFk} that $\mathcal{N}_m(Z_m\rho_m) = \rho_{T^{\otimes m}F^{(m)}_{m-1}(v_m)}$. Using Eq.~\eqref{eq:Fk_recursive} with $k=m-1$, we find
\begin{align}
    T^{\otimes m}F^{(m)}_{m-1}(v_m)
    &= T^{\otimes (m-1)}F^{(m-1)}_{m}(v_{m-1})\otimes \begin{pmatrix}1 \\ 1 \\ 0 \\ 0\end{pmatrix} + T^{\otimes (m-1)}F^{(m-1)}_{m-2}(v_{m-1})\otimes \begin{pmatrix}0 \\ 0 \\ 1 \\ 1\end{pmatrix} \\
    &= \boldsymbol{1}^{(m-1)}\otimes \begin{pmatrix}1 \\ 1 \\ 0 \\ 0\end{pmatrix} + T^{\otimes (m-1)}F^{(m-1)}_{m-2}(v_{m-1})\otimes \begin{pmatrix}0 \\ 0 \\ 1 \\ 1\end{pmatrix}.
\end{align}
This has non-negative coefficients, if and only if $T^{\otimes (m-1)}F^{(m-1)}_{m-2}(v_{m-1}) \geq 0$. Repeating this calculation recursively, yields
\begin{align}
    & T^{\otimes m}F^{(m)}_{m-1}(v_m) \geq 0 \\
    \Leftrightarrow\quad & T^{\otimes (m-1)}F^{(m-1)}_{m-2}(v_{m-1}) \geq 0 \\
    \Leftrightarrow\quad &\ldots \\
    \Leftrightarrow\quad & TF^{(1)}_{0}(v_{1}) \geq 0 \\
    \Leftrightarrow\quad & Tv_{1} = (1,1,1,-1){^{\text{T}}} \geq 0,
\end{align}
which clearly is not positive. Thus, $\mathcal{N}_m(\rho_m) \ngeq 0$. However, Eq.~\eqref{eq:Fk_recursive} with $k=m$ shows that $\mathcal{N}_{m+1}(\rho_m) \propto \id \geq 0$.

Fourth and finally, the local dimension of the state $\rho_m$ is given by $2^m$, i.e., it scales exponentially with the negativity-level. Numerically, however, we were able to find $m$-multinegative states up to $m=7$ with local dimension $\lceil (m+3)/2\rceil$ within the class of Grid states \cite{lockhart2021combinatorial, lockhart2018entanglement}, suggesting that increasing the level of $m$-multinegativity requires only a linear increase in local dimension. 

\end{document}